\definecolor{darkred}{rgb}{.7,0,0}
\definecolor{darkgreen}{rgb}{0,0.7,0}
\definecolor{darkblue}{rgb}{0,0,0.7}
\newcommand{\bbE}{\mathbb{E}}
\newcommand{\bbR}{\mathbb{R}}
\newcommand{\cO}{\mathcal{O}}
\newcommand{\testfunction}{\ensuremath{\mathcal{B}_b}}
\newtheorem{theorem}{Theorem}
\newtheorem{lemma}{Lemma}
\newtheorem{rem}{Remark}
\newtheorem{corollary}{Corollary}[section]
\newtheorem{prop}{Proposition}
\newcounter{hypA}
\newenvironment{hypA}{\refstepcounter{hypA}\begin{itemize}
  \item[({\bf A\arabic{hypA}})]}{\end{itemize}}
\title{Multilevel Sequential Monte Carlo Samplers for Normalizing Constants}
\author{Pierre Del  Moral\thanks{Center INRIA
Bordeaux Sud-Ouest \& Institut de Mathematiques de Bordeaux, Universite de Bordeaux I, 33405, FR}
\and
Ajay Jasra\thanks{Department of Statistics \& Applied Probability,
National University of Singapore, Singapore, 117546, SG}
\and
Kody Law\thanks{Computer Science and Mathematics Division,
Oak Ridge National Laboratory, 
Oak Ridge, 37934 TN, USA}
\and 
Yan Zhou\thanks{Department of Statistics \& Applied Probability,
National University of Singapore, Singapore, 117546, SG}}
\begin{document}
\maketitle

\begin{abstract}
This article considers the sequential Monte Carlo (SMC) approximation of ratios of normalizing constants associated to posterior
distributions which in principle rely on continuum models.  Therefore, the Monte Carlo estimation error
and the discrete approximation error must be balanced.  A multilevel strategy is utilized to substantially 
reduce the cost to obtain a given error level in the approximation as compared to standard estimators.
Two estimators are considered and relative variance bounds are given.
The theoretical results are numerically illustrated for the example of identifying a parametrized permeability 
in an elliptic equation given point-wise observations of the pressure.

   \bigskip
   \noindent \textbf{Key words}: Multi-Level Monte Carlo, Sequential Monte Carlo, Bayesian Inverse Problems.

   \noindent \textbf{AMS subject classification}: 82C80, 60K35.

\end{abstract}



 


\section{Introduction}

Over the past decades there has been an explosion of interest in accounting for uncertainty in the simulation
of systems in science and engineering applications which are governed by continuum limiting systems such 
as partial differential equations (PDEs) \cite{knio:10, walsh:86,walstrom:1967}.  
The setting bears similarities to the case of continuous stochastic processes, 
which have enjoyed attention for much longer (e.g.~\cite{oksendal:2003}).

Consider a sequence of probability measures $\{\eta_{l}\}_{l\geq 0}$ on a common measurable 
space $(E,\mathcal{E})$; assume that the probabilities have common dominating
finite-measure $du$. 
In particular, for some known $\kappa_{l}:E\rightarrow\mathbb{R}^+$, let
\begin{equation}
\label{eq:target}
\eta_{l}(du) = \frac{\kappa_l(u)}{Z_l}du
\end{equation}
where the normalizing constant $Z_l = \int_E\kappa_l(u)du$ may be unknown.  
The context of interest is when the sequence of densities is
associated to an `accuracy' parameter $h_l$, with $h_l\rightarrow 0$ 
as $l\rightarrow \infty$ with $\infty>h_0>h_1>\cdots>h_{\infty}=0$. 


When estimating statistics $\bbE_{\eta_{\infty}}[g(U)]$, for $g:E\rightarrow\mathbb{R}$,
in general one must approximate
the limiting measure by $\eta_L$ and perform statistical estimation with respect to this.  
For larger $L$, the  approximation of the limit is better, and yet the simulations are 
more expensive and indeed the measure may also be supported on a subspace of the
underlying space $E$ whose dimension is larger.  

Monte Carlo methods for statistical estimation are robust and scalable, although
they are plagued by a ``slow" convergence rate of $\cO(N^{-1/2})$ 
for approximations using $N$ degrees of freedom.  
Attempts to circumvent this issue, for example using sophisticated deterministic high-dimensional 
approximation methods typically result in some manifestation of the ``curse of dimensionality" \cite{bellman:1961}, 
although recent work has indicated potential for the mitigation of such effect for suitably regular problems \cite{griebel:04, schwab:11}.

The multilevel Monte Carlo framework  \cite{giles:08, giles:15,heinrich:01}
allows one to leverage in an optimal way 
the nested problems arising in this context, hence minimizing the necessary cost to
obtain a given level of mean square error.  
In particular, the  multilevel Monte Carlo method seeks to sample from
$\eta_0$ as well as 
a sequence of coupled pairs $(\eta_0,\eta_1),\dots, (\eta_{L-1},\eta_{L})$
and using a collapsing sum representation of $\bbE_{\eta_{L}}[g(U)]$. Then using a 
suitable trade off of computational effort, one can reduce the amount of work, relative to i.i.d.~sampling 
from $\eta_L$ and using Monte Carlo integration, for a given amount of error.
However, we are concerned with the scenario where such independent sampling is not possible,
that is, either $\eta_L$ or from the sequence of couples. As it is well-known, the use
of importance sampling to then use the collapsing sum representation, is often not reasonable,
in the sense that for importance proposals that can be sampled independently, the associated
variance typically explodes exponentially in the dimension of the problem (e.g.~\cite{bickel}).
As a result, there has been an extension of multilevel Monte Carlo methods 
in which the approximate target distribution can be sampled from directly,
to more sophisticated Monte Carlo techniques for inference; however, this is still in its infancy.  
Important examples include the preliminary exploration of multilevel Markov chain Monte Carlo (MCMC)
\cite{hoan:12, ketelsen2013hierarchical}, multilevel sequential Monte Carlo samplers \cite{besk:15}, 
multilevel ensemble Kalman filter \cite{ourmlenkf:15} and multilevel particle filters \cite{ourmlpf:15}.  
It should be noted that MCMC and SMC can perform at a polynomial cost in the dimension; see e.g.~\cite{beskos}
and the references therein.

A significant challenge for inference problems is estimation of the normalizing constant $Z_L$ or ratios thereof $Z_l/Z_k$, $L\geq l>k\geq 0$.
Such quantities are central to Bayesian model comparison and choice \cite{hoeting:99,wasserman:00}. In addition, obtaining unbiased estimates (in the sense that the
expectation is equal to the value, that is, potentially including discretization bias) are often central in
pseudo-marginal algorithms (e.g.~\cite{andrieu:09}). In general the calculation of these quantities are notoriously challenging
(see for instance \cite{yan}) from a computational perspective. 

In this article we extend the framework of \cite{besk:15}
to consider the estimation of the ratio of normalizing constants. 
This is a framework which uses SMC.
We consider both the `standard'  unbiased estimator (\cite{delm:04}) used in SMC, adapted to the multilevel setting
and an estimator which follows the collapsing sum approach for multilevel methods.
For the latter, we introduce a novel decomposition
of the normalizing constant of a Feynman-Kac formula, which corresponds to $Z_L/Z_0$, which facilitates
unbiased estimation. We consider new variance bounds for the estimator \cite{delm:04} and our new
estimate and show that, in general, both approaches perform in a similar manner. 
For a given level of error, the cost is less than a Monte Carlo estimate that uses i.i.d.~sampling from $\eta_{0}$,
to estimate $Z_L/Z_0$; we assume that the former is possible.


The paper is structured as follows.  In Section \ref{sec:est}
the setup will be given, along with a description of the multilevel algorithm and the 
new novel estimator for the normalizing constant.  
Section \ref{sec:theory} contains the theoretical results, including the main theorems of the paper
which allow the multilevel theory to carry through.
Finally, section \ref{sec:numerics} presents the results of numerical experiments on an
example Bayesian inverse problem. The proofs are housed in the appendix.

\section{Estimation}
\label{sec:est}

\subsection{Notations}

Let $(E,\mathcal{E})$ be a measurable space.
The notation $\testfunction(E)$ denotes the class of bounded and measurable real-valued functions. The 
supremum norm is written as $\|f\|_{\infty} = \sup_{u\in E}|f(u)|$ 
and $\mathcal{P}(E)$ is the set of probability measures on $(E,\mathcal{E})$. We will consider non-negative operators 
$K : E \times \mathcal{E} \rightarrow \bbR_+$ such that for each $u \in E$ the mapping $A \mapsto K(u, A)$ is a finite non-negative measure on $\mathcal{E}$ and for each $A \in \mathcal{E}$ the function $u \mapsto K(u, A)$ is measurable; the kernel $K$ is Markovian if $K(u, dv)$ is a probability measure for every $u \in E$.
For a finite measure $\mu$ on $(E,\mathcal{E})$,  and a real-valued, measurable $f:E\rightarrow\mathbb{R}$, we define the operations:
\begin{equation*}
    \mu K  : A \mapsto \int K(u, A) \, \mu(du)\ ;\quad 
    K f :  u \mapsto \int f(v) \, K(u, dv).
\end{equation*}
We also write $\mu(f) = \int f(u) \mu(du)$.

\subsection{Algorithm}

As described in the Introduction, the context of interest is when a sequence of densities 
$\{\eta_{l}\}_{l\ge 0}$, as in (\ref{eq:target}), are
associated to an `accuracy' parameter $h_l$, with $h_l\rightarrow 0$ 
as $l\rightarrow \infty$, such that $\infty>h_0>h_1\cdots>h_{\infty}=0$.  In practice one cannot treat $h_\infty=0$ and so must consider these distributions with $h_l>0$.
The laws with large $h_l$ are easy to sample from with low computational cost, but are very different from $\eta_{\infty}$, whereas, those distributions with small $h_l$ are
hard to sample with relatively high computational cost, but are closer to $\eta_{\infty}$. 
Thus, we choose a maximum level $L\ge 1$ and we will estimate
$$
\mathbb{E}_{\eta_L}[g(U)] := \int_E g(u)\eta_L(du)\ .
$$
By the standard telescoping identity used in MLMC, one has
\begin{align}
\mathbb{E}_{\eta_L}[g(U)] & =  \mathbb{E}_{\eta_0}[g(U)] + \sum_{l=1}^{L}\Big\{
\mathbb{E}_{\eta_l}[g(U)] - \mathbb{E}_{\eta_{l-1}}[g(U)]\Big\} \nonumber \nonumber \\ 
& =\mathbb{E}_{\eta_0}[g(U)] + \sum_{l=1}^{L}\mathbb{E}_{\eta_{l-1}}\Big[
\Big(\frac{\kappa_l(U)Z_{l-1}}{\kappa_{l-1}(U)Z_l} - 1\Big)g(U)\Big]\ .
\label{eq:ml_approx}
\end{align}

Suppose now that one applies an SMC sampler \cite{delm:06b} to obtain 
a collection of samples (particles) that sequentially approximate $\eta_0, \eta_1,\ldots, \eta_L$. 
We consider the case when one initializes the population of particles by sampling  i.i.d.~from $\eta_0$, then at every step  resamples and applies a MCMC Markov kernel to mutate the particles.
We denote by $(U_{0}^{1:N_0},\dots,U_{L-1}^{1:N_{L-1}})$, with $+\infty > N_0\geq N_1\geq \cdots  N_{L-1}\geq 1$, the samples after mutation; one resamples $U_l^{1:N_l}$ according to the weights $G_{l}(U_l^i) = 
(\kappa_{l+1}/\kappa_l)(U_l^{i})$, for indices $l\in\{0,\dots,L-1\}$.
We will denote by $\{M_l\}_{1\leq l\leq L-1}$ the sequence of MCMC kernels used at stages $1,\dots,L-1$, such that $\eta_{l}M_l = \eta_l$.
For $\varphi:E\rightarrow\mathbb{R}$, $l\in\{1,\dots,L\}$, we have the following estimator 
of $\bbE_{\eta_{l-1}}[\varphi(U)]$:
$$
\eta_{l-1}^{N_{l-1}}(\varphi) = \frac{1}{N_{l-1}}\sum_{i=1}^{N_{l-1}}\varphi(U_{l-1}^i)\ . 
$$
We define
$$
\eta_{l-1}^{N_{l-1}}(G_{l-1}M_l(du_l)) = \frac{1}{N_{l-1}}\sum_{i=1}^{N_{l-1}}G_{l-1}(U_{l-1}^i) M_l(U_{l-1}^i,du_l)\ .
$$
The joint probability distribution for the SMC algorithm is 
$$
\prod_{i=1}^{N_0} \eta_0(du_0^i) \prod_{l=1}^{L-1} \prod_{i=1}^{N_l} \frac{\eta_{l-1}^{N_{l-1}}(G_{l-1}M_l(du_l^i))}{\eta_{l-1}^{N_{l-1}}(G_{l-1})}\ .
$$
The algorithm is summarized in Figure \ref{tab:SMC}.
If one considers one more step in the above procedure, that would deliver samples 
$\{U_L^i\}_{i=1}^{N_L}$, a standard SMC sampler estimate of the quantity of interest in (\ref{eq:ml_approx})
is $\eta_L^{N}(g)$; the earlier samples are discarded. 
Within a multi-level context, a consistent SMC estimate of \eqref{eq:ml_approx}
is
\begin{equation}
\widehat{Y} =
\eta_{0}^{N_0}(g) + \sum_{l=1}^{L}\Big\{\frac{\eta_{l-1}^{N_{l-1}}(gG_{l-1})}{\eta_{l-1}^{N_{l-1}}(G_{l-1})} - \eta_{l-1}^{N_{l-1}}(g)\Big\}\label{eq:smc_est}\ , 
\end{equation}
The motivation for such a procedure is that, as shown in \cite{besk:15}, the amount of work, for a given level of error, relative to i.i.d.~sampling
from $\eta_L$ is reduced. Thus the idea of using the approach is clear. However, as is well known in the literature (e.g.~\cite{delm:06b}) SMC samplers can
also estimate ratios of normalizing constants as a by-product of the algorithm. We now consider this and also the amount of work to obtain
a given level of error in this context.

\begin{figure}[!h]
\begin{flushleft}
\medskip
\hrule
\medskip
{\itshape
\begin{enumerate}
\item[\textit{0.}] Sample $U_0^1,\dots U_0^{N_0}$ i.i.d.\@ from $\eta_0$ and compute $G_0(u_0^i)$ for each sample $i\in\{1,\dots,N_0\}$:
%
%
Set $l=0$.
\vspace{0.1cm}
\item[\textit{1.}]  Sample $\check{U}_{l}^{1},\dots,\check{U}_{l}^{N_{l+1}}$  with replacement from $u_{l}^{1:N_l}$ with selection probabilities
$\{G_{l}(u_l^1)/\sum_{j=1}^{N_l} G_l(u_l^j),$ $\dots,(G_{l}(u_l^{N_l})/\sum_{j=1}^{N_l} G_l(u_l^j)\}$.\\
\item[\textit{2.}] Sample $U_{l+1}^i |\check{u}_{l}^{i}$ from $M_{l+1}(\check{u}_{l}^{i},\cdot)$  and compute $G_{l+1}(u_{l+1}^i)$ for each sample $i\in\{1,\dots,N_{l+1}\}$.\\
\item[\textit{3.}] Set $l=l+1$. If $l=L$ stop, otherwise return to the start of Step 1.
%
%
\end{enumerate} }
\medskip
\hrule
\medskip
\end{flushleft}
\vspace{-0.5cm}
\caption{The SMC algorithm.}
\label{tab:SMC}
\end{figure}

\subsection{Normalizing Constant}

Define, for $l\geq 0$
$$
\gamma_l(du_l) = \int_{E^l} \Big(\prod_{p=0}^{l-1} G_{p}(u_p)\Big)\eta_0(du_l) 
\prod_{p=1}^{l} M_p(u_{p-1},du_p).
$$
In our context, it is well known that:
$$
\gamma_l(1) = \frac{Z_l}{Z_0} = \prod_{p=0}^{l-1} \eta_p(G_p).
$$
This suggests the estimator:
\begin{equation}
\gamma_l^{N_{0:l-1}}(1) = \prod_{p=0}^{l-1} \eta_p^{N_p}(G_p) \label{eq:nc_est_stand}
\end{equation}
which is known to be unbiased (\cite{delm:04}). We consider the relative variance
of this estimator in Section \ref{sec:theory}. However, at least on appearance
it may not take advantage of the nature of the ML method. In addition, we 
show that the new estimator below, can potentially be leveraged to remove
the discretization bias.

We propose the following procedure. It should be remarked that it holds in the particular
context under study, but not for general Feynman-Kac models as will be explained below.
We have that for any $l\geq 2$
\begin{eqnarray*}
\gamma_l(1) & = & \eta_0(G_0) + \sum_{p=2}^l \Big(\gamma_p(1) - \gamma_{p-1}(1)\Big)\\
& = & \eta_0(G_0) + \sum_{p=2}^l \Big(\gamma_{p-2}\big(G_{p-2}(M_{p-1}(G_{p-1}) - 1)\big)\Big) \\
& = & \eta_0(G_0) + \sum_{p=2}^l \Big(\gamma_{p-2}\big(G_{p-2}(G_{p-1} - 1)\big)\Big).
\end{eqnarray*}
It is the final line that we will approximate with our MLSMC sampler. It is noted that the final
line holds in the specific case of interest, but is not generally true for a given Feynman-Kac formula. 
The proposed approximation is
$$
\tilde{\gamma}_l^{N_{0:l-2}}(1) = \eta_0^{N_0}(G_0) + \sum_{p=2}^l \Big(\gamma_{p-2}^{N_{0:p-2}}\big(G_{p-2}(G_{p-1} - 1)\big)\Big)
$$
where for any $g\in\mathcal{B}_b(E)$, $p\geq 2$
$$
\gamma_{p-2}^{N_{0:p-2}}(g) = \Big(\prod_{k=0}^{p-3}\eta_{k}^{N_k}(G_k)\big)\eta_{p-2}^{N_{p-2}}(g).
$$
Note that for $l\geq 2$, one has, almost surely,
$$
\tilde{\gamma}_l^{N_{0:l-1}}(1) \neq \prod_{p=0}^{l-1} \eta_p^{N_p}(G_p).
$$
Using \cite{delm:04} it clearly follows that 
$$
\gamma_l(1) = \mathbb{E}[\tilde{\gamma}_l^{N_{0:l-2}}(1)]
$$
where $\mathbb{E}$ is the expectation w.r.t.~the law of the SMC algorithm; the estimator is unbiased.

\subsection{Biased Estimator}

Noting the estimator \eqref{eq:smc_est} another alternative estimator of $\gamma_l(1)$
is
$$
\prod_{p=0}^{l-1} \Bigg(
\eta_{0}^{N_0}(G) + \sum_{l=1}^{p}\Big\{\frac{\eta_{l-1}^{N_{l-1}}(G_pG_{l-1})}{\eta_{l-1}^{N_{l-1}}(G_{l-1})} - \eta_{l-1}^{N_{l-1}}(G_p)\Big\}
\Bigg).
$$
One can easily prove that this estimate is consistent, but biased, in the sense
that
$$
\mathbb{E}\Bigg[\prod_{p=0}^{l-1} \Bigg(
\eta_{0}^{N_0}(G) + \sum_{l=1}^{p}\Big\{\frac{\eta_{l-1}^{N_{l-1}}(G_pG_{l-1})}{\eta_{l-1}^{N_{l-1}}(G_{l-1})} - \eta_{l-1}^{N_{l-1}}(G_p)\Big\}
\Bigg)\Big]
\neq \gamma_{l}(1).
$$
However, the main reason why one may not want to consider its use is due to the 
cost of computing this estimate. If $\sum_{p=0}^{l-1} N_p C_p$ is the ordinary cost of computing \eqref{eq:nc_est_stand} ($C_p$ is the cost per sample), then the cost of this estimator is $\sum_{p=0}^{l-1} N_p\sum_{q=p}^{l-1} C_q$. Such a procedure is undesirable in general
and this is not investigated further.

\subsection{Estimator with no Discretization Bias}

Let $M\in\{1,2,\dots\}$ be a random variable that is independent of the MLSMC
algorithm with $\mathbb{P}_M(M\geq m)>0~\forall m>0$. Suppose further that one can
prove for $N_0, N_1,\dots$ fixed that
\begin{eqnarray}
\lim_{p\rightarrow\infty} \mathbb{E}\Big[\Big(
\gamma_{p-2}^{N_{0:p-2}}\big(G_{p-2}G_{p-1})-
\gamma_{\infty}(1)\Big)^2
\Big]^{1/2} & = & 0 \label{eq:glynn1} \\
\lim_{p\rightarrow\infty} \mathbb{E}\Big[\Big(
\gamma_{p-2}^{N_{0:p-2}}\big(G_{p-2})-
\gamma_{\infty}(1)\Big)^2
\Big]^{1/2} & = & 0 \label{eq:glynn3} \\
\sum_{p=2}^{\infty}\frac{1}{\mathbb{P}_M(M\geq p)}
\mathbb{E}\Big[\Big(
\gamma_{p-2}^{N_{0:p-2}}\big(G_{p-2}(G_{p-1})-
\gamma_{\infty}(1)\Big)^2
\Big]
 & < & +\infty \label{eq:glynn2}
\end{eqnarray}
then one can use the estimator from \cite{rg:15} to obtain an unbiased estimator for $\gamma_{\infty}(1)$:
$$
\frac{1}{\mathbb{P}_M(M\geq 1)}\eta_0^{N_0}(G_0) + \sum_{p=2}^M
\frac{1}{\mathbb{P}_M(M\geq p)}
 \Big(\gamma_{p-2}^{N_{0:p-2}}\big(G_{p-2}(G_{p-1} - 1)\big)\Big).
$$
Note that, even if one can prove \eqref{eq:glynn1}-\eqref{eq:glynn2},
one must be prepared to spend an arbitrary amount of computational cost,
which is not reasonable in the current context. Hence we do not consider this
further here. We further remark that this particular approach is unlikely to work when
estimating $\mathbb{E}_{\eta_{\infty}}[g(U)]$ (as in \eqref{eq:ml_approx}) as there is no unbiased property of the estimators
(unbiased in the sense of expectations and not associated to the discretization).

\section{Theory}\label{sec:theory}

\subsection{Relative Variance Bounds}

Throughout $E$ is compact.
We make the following assumptions:

\begin{hypA}
\label{hyp:A}
There exist $0<\underline{C}<\overline{C}<+\infty$ such that
\begin{eqnarray*}
\sup_{0\leq l< \infty} \sup_{u\in E} \kappa_l (u) & \leq & \overline{C}\ ;\\
\inf_{0\leq l< \infty} \inf_{u\in E} \kappa_l (u) & \geq & \underline{C}\ .
\end{eqnarray*}
\end{hypA}

\begin{hypA}
\label{hyp:B}
There exists a $\rho\in(0,1)$ such that for any $l\ge 1$, $(u,v)\in E^2$, $A\in\mathcal{E}$:
$$
\int_A M_l(u,du') \geq \rho \int_A M_l(v,dv')\ .
$$
\end{hypA}

These assumptions are almost identical to those in \cite{besk:15}. (A\ref{hyp:A}) 
is different but equivalent to (A1) in \cite{besk:15}. The proofs of the following Theorems are
in Appendices  \ref{app:proof_new} and \ref{app:proof} respectively. It is remarked that there
are other results in the spirit of Theorem \ref{theo:2} below, (see \cite{cdg:11,schweizer:12}) but the bounds are not sharp enough for
the purposes of this work.


\begin{theorem}\label{theo:2}
Assume (A\ref{hyp:A}-\ref{hyp:B}). Then there exists a $c,C<+\infty$ such that
for any $L\geq 2$, $N_0 \geq N_1 \geq \dots \geq N_{L-1} > cL$,
$$
\mathbb{E}\Big[\Big(\frac{\gamma_L^{N_{0:L-1}}(1)}{\gamma_L(1)} - 1 \Big)^2\Big] \leq
$$
$$
C\sum_{p=0}^{L-1} \frac{1}{N_p}\Big(
\Big(\sum_{q=p}^{L-1}  
\Big\|\frac{G_q}{\eta_q(G_q)}-1\Big\|_{\infty}
\Big)^2 +
\Big\|\frac{G_p}{\eta_p(G_p)}-1\Big\|_{\infty}\frac{(p+1)}{N_p}
\Big).
$$
\end{theorem}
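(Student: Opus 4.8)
The plan is to exploit the Feynman--Kac product structure of $\gamma_L^{N_{0:L-1}}(1)=\prod_{p=0}^{L-1}\eta_p^{N_p}(G_p)$ and reduce the relative variance to a sum of per-level conditional variances through a martingale decomposition. Write $Q_p(u,dv)=G_{p-1}(u)M_p(u,dv)$, $Q_{p,L}=Q_{p+1}\cdots Q_L$, and $H_{p,L}:=Q_{p,L}1$, so that $H_{p,L}=G_pM_{p+1}H_{p+1,L}$ with $H_{L,L}=1$. Let $\mathcal{F}_{p-1}$ be the $\sigma$-field generated by the particle system through level $p-1$, and set $\Gamma_p:=\gamma_p^{N_{0:p-1}}(1)\,\eta_p^{N_p}(H_{p,L})$, so that $\Gamma_{L-1}=\gamma_L^{N_{0:L-1}}(1)$ while $\bbE[\Gamma_0]=\eta_0(H_{0,L})=\gamma_L(1)$. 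The first step is to check, using that conditionally on $\mathcal{F}_{p-1}$ the particles $U_p^{1:N_p}$ are i.i.d.\ from the predicted law $\Phi_p(\eta_{p-1}^{N_{p-1}})(\cdot)=\eta_{p-1}^{N_{p-1}}(G_{p-1}M_p(\cdot))/\eta_{p-1}^{N_{p-1}}(G_{p-1})$, together with the semigroup identity $Q_pH_{p,L}=H_{p-1,L}$, that $\bbE[\Gamma_p\mid\mathcal{F}_{p-1}]=\Gamma_{p-1}$. Thus $\Gamma_p-\Gamma_{p-1}$ is a martingale difference sequence, its increments are orthogonal, and
\[
\bbE\Big[\Big(\frac{\gamma_L^{N_{0:L-1}}(1)}{\gamma_L(1)}-1\Big)^2\Big]=\sum_{p=0}^{L-1}\frac{\bbE\big[(\Gamma_p-\Gamma_{p-1})^2\big]}{\gamma_L(1)^2}.
\]

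Second, I would compute each increment's conditional variance. Since $\eta_p^{N_p}(H_{p,L})$ has conditional variance $N_p^{-1}\,\mathrm{Var}_{\Phi_p(\eta_{p-1}^{N_{p-1}})}(H_{p,L})$ while $\gamma_p^{N_{0:p-1}}(1)$ is $\mathcal{F}_{p-1}$-measurable, writing $H_{p,L}=c_{p,L}R_{p,L}$ with $c_{p,L}=\prod_{q=p}^{L-1}\eta_q(G_q)=\gamma_L(1)/\gamma_p(1)$ and $\eta_p(R_{p,L})=1$, the factors of $\gamma_L(1)$ cancel exactly and each term collapses to
\[
\frac{\bbE\big[(\Gamma_p-\Gamma_{p-1})^2\big]}{\gamma_L(1)^2}=\frac{1}{N_p}\,\bbE\Big[\Big(\frac{\gamma_p^{N_{0:p-1}}(1)}{\gamma_p(1)}\Big)^2\mathrm{Var}_{\Phi_p(\eta_{p-1}^{N_{p-1}})}(R_{p,L})\Big].
\]
This identity is the heart of the argument: it isolates the \emph{normalized} semigroup function $R_{p,L}$ and a relative second moment of the level-$p$ normalizing constant, and it already exposes the two contributions in the stated bound.

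Third, I would bound the two factors separately. For the variance functional, write $\bar{G}_q:=G_q/\eta_q(G_q)$ and establish a stability lemma $\|R_{p,L}-1\|_{\infty}\le C\sum_{q=p}^{L-1}\|\bar{G}_q-1\|_{\infty}$ by telescoping over the insertion of each centered potential $\bar{G}_q-1$ and using assumption (A\ref{hyp:B}) to make the normalized Feynman--Kac semigroup a contraction (the geometric mixing factors $(1-\rho)^{q-p}$ are simply bounded by $1$). Since a variance is bounded by $\|R_{p,L}-1\|_{\infty}^2$, and since $\bbE[(\gamma_p^{N_{0:p-1}}(1)/\gamma_p(1))^2]=1+\mathrm{RelVar}$ is controlled by a constant once $N_p\ge N_{L-1}>cL>cp$ (obtained by a downward induction on the horizon, feeding the theorem's own bound at smaller $L$ into this estimate), the dominant contribution is $N_p^{-1}\big(\sum_{q=p}^{L-1}\|\bar{G}_q-1\|_{\infty}\big)^2$. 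The higher-order term is obtained by a sharper treatment of $\mathrm{Var}_{\Phi_p(\eta_{p-1}^{N_{p-1}})}(R_{p,L})$: replacing the random predicted law by $\eta_p$ incurs a correction governed by the $L^2$ fluctuation of $\eta_{p-1}^{N_{p-1}}-\eta_{p-1}$, whose scale is $\mathcal{O}((p+1)/N_p)$, and since $R_{p,L}=c_{p,L}^{-1}G_pM_{p+1}H_{p+1,L}$ carries the leading potential $G_p$, this correction contributes a factor $\|\bar{G}_p-1\|_{\infty}(p+1)/N_p$, matching the second summand.

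The main obstacle is the stability lemma of the third step together with the uniform-in-$L$ control it rests on: one must propagate the centered perturbations $\bar{G}_q-1$ through the normalized semigroup while keeping the dependence genuinely linear in the $\|\bar{G}_q-1\|_{\infty}$ (so that the bound degenerates correctly as consecutive discretization levels coalesce), and one must close the induction for $\bbE[(\gamma_p^{N_{0:p-1}}(1)/\gamma_p(1))^2]$ uniformly, which is precisely where the hypothesis $N_{L-1}>cL$ is consumed. Disentangling the fluctuation correction so that exactly one factor $\|\bar{G}_p-1\|_{\infty}$ and the combinatorial weight $p+1$ survive (rather than an unwieldy product of all the level errors) is the delicate book-keeping, and it is the step most likely to require the careful oscillation and Dobrushin-coefficient estimates afforded by (A\ref{hyp:A})--(A\ref{hyp:B}).
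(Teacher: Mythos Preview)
Your plan is correct and runs parallel to the paper's proof, but via a slightly different decomposition. The paper invokes Schweizer's identity (his Proposition~2.3) which yields per-level terms of the form
\[
T_{p,L}^{N_{0:p}}=\Big(\tfrac{\gamma_p^{N_{0:p-1}}(1)}{\gamma_p(1)}\Big)^2\Big(\eta_p^{N_p}(h_{p,L}^2)-\eta_p^{N_p}(h_{p,L})^2+\eta_p^{N_p}(h_{p,L})\,\eta_p^{N_p}(\bar G_p-1)\Big),
\]
with $h_{p,L}$ exactly your $R_{p,L}$; the first two summands are bounded by $\|h_{p,L}-1\|_\infty^2$ using the paper's Lemma~\ref{lem:new} (your ``stability lemma''), while the third cross-term is handled by Proposition~\ref{prop:1} (a non-asymptotic bound from C\'erou--Del~Moral--Guyader), exploiting $\eta_p^{\otimes 2}\big(h_{p,L}\otimes(\bar G_p-1)\big)=0$ to produce the factor $\|\bar G_p-1\|_\infty\,(p+1)/N_p$. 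Your direct martingale decomposition instead gives the per-level term $N_p^{-1}\,\bbE\big[(\gamma_p^{N_{0:p-1}}(1)/\gamma_p(1))^2\,\mathrm{Var}_{\Phi_p(\eta_{p-1}^{N_{p-1}})}(R_{p,L})\big]$, in which the conditional variance is already dominated by $\|R_{p,L}-1\|_\infty^2$ for \emph{any} sampling law; so your route delivers the first summand alone and, strictly speaking, a stronger bound than the one stated. Consequently the ``higher-order term'' paragraph in your sketch is unnecessary: the second summand in the theorem is an artefact of the Schweizer cross-term, not of the intrinsic problem, and your heuristic about replacing $\Phi_p(\eta_{p-1}^{N_{p-1}})$ by $\eta_p$ with an $\cO((p+1)/N_p)$ correction does not match how that term actually arises. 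For the uniform control of $\bbE\big[(\gamma_p^{N_{0:p-1}}(1)/\gamma_p(1))^2\big]$ the paper simply quotes Proposition~\ref{prop:1}, which gives the bound directly under $N_{p-1}>cp$; your inductive scheme also closes (choose $c$ large enough so that the recursion $a_p\le 1+C\sum_{q<p}a_q/N_q$ stays bounded), but the CDG proposition is the cleaner tool here.
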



\begin{theorem}\label{theo:1}
Assume (A\ref{hyp:A}-\ref{hyp:B}). Then there exists a $c,C<+\infty$ such that
for any $L\geq 2$, $N_0 \geq N_1 \geq \dots \geq N_{L-2} > c(L-1)$,
$$
\mathbb{E}\Big[\Big(\frac{\tilde{\gamma}_L^{N_{0:L-2}}(1)}{\gamma_L(1)} - 1 \Big)^2\Big] \leq
$$
$$
C\Big(\frac{1}{N_0} + \sum_{p=2}^L \frac{(p-1)}{N_{p-2}}\|G_{p-1}-1\|_{\infty}^2
+ \sum_{p=2}^L \sum_{q=2}^{p-1} \frac{(q-1)}{N_{q-2}}
\|G_{p-1}-1\|_{\infty}\|G_{q-1}-1\|_{\infty}
\Big).
$$
\end{theorem}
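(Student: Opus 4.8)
The plan is to analyze the additive error directly, exploiting that $\tilde{\gamma}_L^{N_{0:L-2}}(1)$ is, by construction, exactly the three-line identity for $\gamma_L(1)$ with each $\gamma_{p-2}$ and $\eta_0$ replaced by its particle approximation. Writing $H_p := G_{p-2}(G_{p-1}-1)$, $D_p := \gamma_{p-2}^{N_{0:p-2}}(H_p) - \gamma_{p-2}(H_p)$ and $E_0 := \eta_0^{N_0}(G_0) - \eta_0(G_0)$, one has $\tilde{\gamma}_L^{N_{0:L-2}}(1) - \gamma_L(1) = E_0 + \sum_{p=2}^L D_p$. Since the estimator is unbiased, the relative mean-square error equals $\gamma_L(1)^{-2}$ times $\mathbb{E}[E_0^2] + 2\sum_p \mathbb{E}[E_0 D_p] + \sum_{p,p'}\mathbb{E}[D_p D_{p'}]$, so everything reduces to second moments and covariances of local errors. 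Before estimating these I would record that, under (A\ref{hyp:A}), each $G_q$ and each ratio $\eta_q(G_q) = Z_{q+1}/Z_q$ lies in a fixed compact subinterval of $(0,\infty)$; the products $\prod_k \eta_k(G_k)$ that appear when one normalizes $\gamma_{p-2}(1)$ by $\gamma_L(1)$ stay bounded above and below in the multilevel regime, and I would absorb all such $O(1)$ ratios into $C$. The term $\mathbb{E}[E_0^2]/\gamma_L(1)^2$ is then the variance of an i.i.d.\ average and produces the $1/N_0$ contribution.

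The core is a level-by-level decomposition of each $D_p$. Following the analysis of Del Moral \cite{delm:04} underlying Theorem \ref{theo:2} (as used in \cite{besk:15}), the unnormalized particle measure admits the exact telescoping identity $\gamma_{p-2}^{N_{0:p-2}}(H_p) - \gamma_{p-2}(H_p) = \sum_{k=0}^{p-2}\delta_k^p$, where $\delta_k^p$ is the level-$k$ sampling error $[\eta_k^{N_k}-\eta_k]$ propagated to $H_p$ through the Feynman--Kac semigroup and weighted by the empirical normalizer $\gamma_k^{N_{0:k}}(1)$. The decisive structural fact is that, with respect to the natural filtration $\{\mathcal{F}_k\}$ generated by the particle system up to level $k$, each increment is conditionally centered, $\mathbb{E}[\delta_k^p \mid \mathcal{F}_{k-1}] = 0$, so that $\{\delta_k^p\}_k$ is a martingale-difference sequence. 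Two inputs then control a single increment: first, $\|H_p\|_\infty \le \|G_{p-2}\|_\infty \, \|G_{p-1}-1\|_\infty \lesssim \|G_{p-1}-1\|_\infty$ by (A\ref{hyp:A}), which injects the multilevel smallness; second, the mixing assumption (A\ref{hyp:B}) provides the geometric contraction of the normalized semigroup, so the oscillation of the propagated test function stays bounded uniformly in the level gap, yielding $\mathbb{E}[(\delta_k^p)^2] \lesssim \gamma_{p-2}(1)^2 \, N_k^{-1} \, \|G_{p-1}-1\|_\infty^2$ with a constant that does not grow with $p$.

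With the martingale structure in hand I would assemble the second moments. For the diagonal terms, orthogonality of martingale differences removes all $k\ne k'$ cross-covariances within $D_p$, so $\mathbb{E}[D_p^2] = \sum_{k=0}^{p-2}\mathbb{E}[(\delta_k^p)^2] \lesssim \gamma_{p-2}(1)^2 \, \|G_{p-1}-1\|_\infty^2 \sum_{k=0}^{p-2} N_k^{-1} \le \gamma_{p-2}(1)^2 \, \|G_{p-1}-1\|_\infty^2 \, (p-1)/N_{p-2}$, using $N_0\ge\dots\ge N_{p-2}$; after normalization this is the second term of the bound. For a cross term $\mathbb{E}[D_p D_q]$ with $q<p$, both $\delta_k^p$ and $\delta_k^q$ are built from the same particles and conditionally centered, so $\mathbb{E}[\delta_k^p \delta_j^q]=0$ whenever $k\ne j$ (condition on the larger index), and only the shared levels $k=j\le q-2$ survive. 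On each such level I would apply Cauchy--Schwarz to the single pair $(\delta_k^p,\delta_k^q)$, which retains the linear scaling $\mathbb{E}[\delta_k^p \delta_k^q] \lesssim \gamma_{p-2}(1)\gamma_{q-2}(1)\, N_k^{-1}\, \|G_{p-1}-1\|_\infty\|G_{q-1}-1\|_\infty$, and then sum over $k\le q-2$ to obtain the $(q-1)/N_{q-2}$ prefactor. Summing over $2\le q<p\le L$ produces the double sum; the cross terms involving $E_0$ are handled identically and are of the same or smaller order.

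The main obstacle is keeping the cross-covariance bound linear in $1/N$: a direct Cauchy--Schwarz on the full errors $D_p,D_q$ would give $\sqrt{\mathrm{Var}(D_p)\mathrm{Var}(D_q)}$ and hence the unwanted square-root scaling, so one must first exploit the martingale orthogonality to collapse $\mathbb{E}[D_p D_q]$ onto the shared levels $k\le q-2$ and only then apply Cauchy--Schwarz level by level. Making this rigorous requires the time-uniform stability estimates from (A\ref{hyp:B}) to guarantee that the constants in the single-increment bounds do not grow with the level index, and it requires controlling the moments of the random empirical normalizers $\gamma_k^{N_{0:k}}(1)/\gamma_k(1)$ that sit inside each $\delta_k^p$. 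This last point is exactly where the hypothesis $N_{p-2}>c(L-1)$ enters: it forces the accumulated relative variance of the normalizing-constant estimators up to level $p-2$ to stay below a fixed constant, so these prefactors have bounded second and fourth moments and the Cauchy--Schwarz steps above remain uniformly valid. Assembling the three contributions then gives the claimed bound.
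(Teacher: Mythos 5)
Your proposal is correct, and its skeleton coincides with the paper's proof: the same splitting of $\tilde{\gamma}_L^{N_{0:L-2}}(1)-\gamma_L(1)$ into the level-zero i.i.d.\ error $E_0$ plus the unnormalized errors $D_p=[\gamma_{p-2}^{N_{0:p-2}}-\gamma_{p-2}](G_{p-2}(G_{p-1}-1))$, and, for the decisive cross terms $\mathbb{E}[D_pD_q]$, the same mechanism: the telescoping decomposition of \cite[Proposition 7.4.1]{delm:04} plus conditional centering, which collapses the covariance onto the shared levels $k\le q-2$. This is exactly the first half of the paper's Lemma \ref{lem:2}, and you correctly identify that a global Cauchy--Schwarz would destroy the $1/N$ scaling. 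Where you diverge is in finishing that bound. The paper re-assembles the collapsed sum, via $Q_{s,p-2}=Q_{s,q-2}Q_{q-2,p-2}$, into a covariance of the level-$(q-2)$ particle measure applied to the pair $\big(Q_{q-2,p-2}(G_{p-2}(G_{p-1}-1)),\,G_{q-2}(G_{q-1}-1)\big)$, converts it by Lemma \ref{lem:1} into a tensor-product bias, bounds that with Proposition \ref{prop:1}, and only afterwards uses the stability estimate \cite[Lemma 4.1]{cdg:11} to extract $\|G_{p-1}-1\|_\infty\|G_{q-1}-1\|_\infty$. You instead apply Cauchy--Schwarz increment by increment, which hinges on the uniform-in-gap bound $\mathbb{E}[(\delta_k^p)^2]\le C\gamma_{p-2}(1)^2N_k^{-1}\|G_{p-1}-1\|_\infty^2$; this is valid, but it consumes the same two ingredients as the paper's route, namely \eqref{eq:h_cont} (to control $Q_{k,p-2}$ of the test function against $\prod_{q=k}^{p-3}\eta_q(G_q)$) and second-moment control of the empirical normalizers $\gamma_k^{N_{0:k-1}}(1)/\gamma_k(1)$, i.e.\ Proposition \ref{prop:1} with constant $F$, which is where $N>c(L-1)$ enters in both arguments (fourth moments are unnecessary, since the normalizer is $\mathscr{F}_{k-1}^{N_{0:k-1}}$-measurable and the conditional variance of the increment is deterministically bounded). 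The trade-off: the paper reuses Proposition \ref{prop:1} as a single black box for diagonal and cross terms alike, whereas your route is more elementary at the increment level but re-derives a slice of the \cite{cdg:11} machinery. One small repair: the $E_0$ cross terms are not best handled ``identically'' --- summing your level-by-level bounds leaves $N_0^{-1}\sum_p\|G_{p-1}-1\|_\infty$, which is dominated by the stated bound only after a further weighted Young argument; the clean fix is $2|\mathbb{E}[E_0\sum_pD_p]|\le\mathbb{E}[E_0^2]+\mathbb{E}[(\sum_pD_p)^2]$, which is in effect what the paper does by bounding the mean square error by the sum of the two squared terms at the outset.
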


\subsection{Cost Analysis}

In order to investigate the cost for a given level of error, we introduce the following assumption.

\begin{hypA}
\label{hyp:C}
(i) There exist $\alpha, \zeta>0$, 
and a $C>0$ such that for all $p>0$
\begin{equation}\label{eq:mlrates}
\begin{cases}
|\frac{\gamma_p(1)}{\gamma_\infty(1)}-1 | & \leq C h_p^\alpha ;\\
{\rm C} (G_{p-1}) & \leq C h_p^{-\zeta},
\end{cases}
\end{equation}
where ${\rm C} (G_{p-1})$ denotes the cost to evaluate $G_{p-1}$.

(ii) There exist a $\beta>0$ and a $C>0$ such that for all $p>0$
$$
\Big\|\frac{G_{p-1}}{\eta_{p-1}(G_{p-1})}-1\Big\|_{\infty}^2  \leq C h_p^\beta.
$$

(iii) There exist a $\beta>0$ and a $C>0$ such that for all $p>0$
$$
\|G_{p-1}-1\|_{\infty}^2  \leq C h_p^\beta.
$$
\end{hypA}

\begin{corollary}\label{col:cost}
Assume (A\ref{hyp:A},\ref{hyp:B},\ref{hyp:C}(i)(ii)) and $2\alpha\geq {\rm max}\{\beta,\zeta\}$. 
Then for any $\varepsilon>0$, there exist $L,\{N_l\}_{l=0}^L$
and $C>0$ such that 
\begin{equation}
\frac1{\gamma_\infty(1)^2}\mathbb{E}\Big[\Big({\gamma_L^{N_{0:L-2}}(1)}-\gamma_\infty(1)\Big)^2\Big] \leq C \varepsilon^2,
\label{eq:mlmse}
\end{equation}
for the following cost 
\begin{equation}\label{eq:mlncCosts}
{\rm COST} \leq C 
\begin{cases}
\varepsilon^{-2}|\log(\varepsilon)|, & \text{if} \quad \beta > \zeta,\\              
\varepsilon^{-2} |\log(\varepsilon)|^3, & \text{if} \quad \beta = \zeta,\\
\varepsilon^{-\left( 2 + \frac{\zeta-\beta}{\alpha} \right)}|\log(\varepsilon)|, & \text{if} \quad \beta < \zeta. 
\end{cases}
\end{equation}
\end{corollary}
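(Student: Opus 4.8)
The plan is to split the normalized mean-square error into a squared-bias term and a variance term, control each separately, and then optimize the particle numbers $\{N_l\}_{l=0}^{L-1}$ against the computational cost. Reading the estimator in \eqref{eq:mlmse} as the standard one from Theorem \ref{theo:2}, I would write $\gamma_L^{N_{0:L-1}}(1)-\gamma_\infty(1)=(\gamma_L^{N_{0:L-1}}(1)-\gamma_L(1))+(\gamma_L(1)-\gamma_\infty(1))$ and apply $(a+b)^2\le 2a^2+2b^2$ to get
\[
\frac{1}{\gamma_\infty(1)^2}\mathbb{E}\big[(\gamma_L^{N_{0:L-1}}(1)-\gamma_\infty(1))^2\big]
\le \frac{2\,\mathbb{E}[(\gamma_L^{N_{0:L-1}}(1)-\gamma_L(1))^2]}{\gamma_\infty(1)^2}
+2\Big(\frac{\gamma_L(1)}{\gamma_\infty(1)}-1\Big)^2 .
\]
By (A\ref{hyp:C})(i) the bias term is at most $2C^2h_L^{2\alpha}$, so it suffices to pick the finest level $L$ with $h_L^\alpha\asymp\varepsilon$; since the accuracy parameters decay geometrically (the standing MLMC convention $h_l\asymp M^{-l}$), this forces $L=\mathcal{O}(|\log\varepsilon|)$. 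For the variance term I invoke Theorem \ref{theo:2}, legitimate because $\gamma_L(1)^2\asymp\gamma_\infty(1)^2$ by (A\ref{hyp:A}); its applicability requires $N_{L-1}>cL$, which I verify a posteriori for the chosen allocation.

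The next step reduces the SMC variance bound to a classical multilevel form. Using (A\ref{hyp:C})(ii) to replace $\|G_q/\eta_q(G_q)-1\|_\infty^2$ by $Ch_{q+1}^\beta$, and bounding the inner sum by Cauchy--Schwarz,
\[
\Big(\sum_{q=p}^{L-1}\Big\|\tfrac{G_q}{\eta_q(G_q)}-1\Big\|_\infty\Big)^2
\le (L-p)\sum_{q=p}^{L-1}\Big\|\tfrac{G_q}{\eta_q(G_q)}-1\Big\|_\infty^2
\le C\,L\,h_p^\beta,
\]
where the geometric decay makes the tail $\sum_{q\ge p}h_{q+1}^\beta$ comparable to its first term, while the remaining term in Theorem \ref{theo:2} is of strictly smaller order in $1/N_p$ and is absorbed, one obtains
\[
\frac{1}{\gamma_\infty(1)^2}\mathbb{E}\big[(\gamma_L^{N_{0:L-1}}(1)-\gamma_L(1))^2\big]
\le C\,L\sum_{p=0}^{L-1}\frac{h_p^\beta}{N_p}.
\]
It is exactly this Cauchy--Schwarz step on the inner sum that produces the single extra factor $L\asymp|\log\varepsilon|$ relative to the textbook MLMC cost rates.

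I would then minimize $\mathrm{COST}=\sum_{p=0}^{L-1}N_p\,\mathrm{C}(G_p)$, with $\mathrm{C}(G_p)\le C h_p^{-\zeta}$ by (A\ref{hyp:C})(i), subject to $C L\sum_p h_p^\beta/N_p\le\varepsilon^2$. A Lagrange-multiplier (equivalently Cauchy--Schwarz) argument gives the optimal continuous allocation $N_p\propto \varepsilon^{-2}L\,h_p^{(\beta+\zeta)/2}\sum_{k=0}^{L-1}h_k^{(\beta-\zeta)/2}$, and rounding up to integers changes only constants, yielding
\[
\mathrm{COST}\le C\,\frac{L}{\varepsilon^{2}}\Big(\sum_{p=0}^{L-1}h_p^{(\beta-\zeta)/2}\Big)^2 .
\]
Evaluating the geometric sum in the three regimes — bounded when $\beta>\zeta$, equal to $L$ when $\beta=\zeta$, and dominated by its last term $\asymp h_L^{(\beta-\zeta)/2}\asymp\varepsilon^{(\beta-\zeta)/(2\alpha)}$ when $\beta<\zeta$ — reproduces the three claimed costs in \eqref{eq:mlncCosts}.

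The main obstacle, and the place where the hypothesis $2\alpha\ge\max\{\beta,\zeta\}$ enters, is confirming admissibility of this allocation. Monotonicity $N_0\ge\cdots\ge N_{L-1}$ is immediate since $h_p^{(\beta+\zeta)/2}$ decreases in $p$; the delicate point is the floor $N_{L-1}>cL$ needed for Theorem \ref{theo:2}. Substituting $h_L\asymp\varepsilon^{1/\alpha}$ shows $N_{L-1}$ is of order $L\,\varepsilon^{-r}$ (or $L^2\varepsilon^{-r}$ in the balanced case $\beta=\zeta$) with $r>0$ precisely because $2\alpha\ge\max\{\beta,\zeta\}$ (the relevant inequalities being strict in each regime), so $N_{L-1}$ grows faster than $L$ as $\varepsilon\downarrow 0$ and the constraint holds for all sufficiently small $\varepsilon$. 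I would finish by checking that the neglected second-order variance term and the integer rounding contribute only multiplicative constants, so that the total MSE is indeed $\le C\varepsilon^2$ at the stated cost.
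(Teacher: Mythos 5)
Your proposal is correct and takes essentially the same route as the paper: the paper proves Corollary \ref{col:cost1} in detail (bias--variance split giving $L\asymp|\log\varepsilon|$, reduction of the variance theorem to the multilevel form $L\sum_p h_p^{\beta}/N_p$, the allocation $N_l\propto L\varepsilon^{-2}K_L h_l^{(\beta+\zeta)/2}$ with $K_L=\sum_l h_l^{(\beta-\zeta)/2}$, the three-regime evaluation of $K_L$, and the admissibility check via $2\alpha\geq\max\{\beta,\zeta\}$) and then states that the proof of Corollary \ref{col:cost} is identical except for the second-order term $\sum_p\|G_p/\eta_p(G_p)-1\|_{\infty}(p+1)/N_p^2$ of Theorem \ref{theo:2}, which it, exactly like you, dismisses as smaller than $\mathcal{O}(\varepsilon^2)$ without detailed computation. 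Your a posteriori verification that $N_{L-1}>cL$ holds under $2\alpha\geq\max\{\beta,\zeta\}$ is, if anything, slightly more explicit than the paper's parenthetical remark.
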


\begin{corollary}\label{col:cost1}
Assume (A\ref{hyp:A},\ref{hyp:B},\ref{hyp:C}(i)(iii)) and $2\alpha\geq {\rm max}\{\beta,\zeta\}$. 
Then for any $\varepsilon>0$, there exist $L,\{N_l\}_{l=0}^L$
and $C>0$ such that 
\begin{equation}
\frac1{\gamma_\infty(1)^2}\mathbb{E}\Big[\Big({\tilde{\gamma}_L^{N_{0:L-2}}(1)}-\gamma_\infty(1)\Big)^2\Big] \leq C \varepsilon^2,
\label{eq:mlmse1}
\end{equation}
for the following cost 
\begin{equation}\label{eq:mlncCosts1}
{\rm COST} \leq C 
\begin{cases}
\varepsilon^{-2}|\log(\varepsilon)|, & \text{if} \quad \beta > \zeta,\\              
\varepsilon^{-2} |\log(\varepsilon)|^3, & \text{if} \quad \beta = \zeta,\\
\varepsilon^{-\left( 2 + \frac{\zeta-\beta}{\alpha} \right)}|\log(\varepsilon)|, & \text{if} \quad \beta < \zeta.
\end{cases}
\end{equation}
\end{corollary}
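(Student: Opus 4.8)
The plan is to reduce Corollary~\ref{col:cost1} to the relative-variance bound of Theorem~\ref{theo:1} through a bias--variance decomposition, and then to run the standard multilevel cost optimisation. The whole argument mirrors that of Corollary~\ref{col:cost}, with Theorem~\ref{theo:1} and (A\ref{hyp:C})(iii) playing the roles that Theorem~\ref{theo:2} and (A\ref{hyp:C})(ii) play there; so in practice I would prove the two corollaries in tandem.

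First I would split the relative mean-square error about $\gamma_\infty(1)$ into a statistical part and a discretisation part. Because the estimator is unbiased for $\gamma_L(1)$, i.e.\ $\mathbb{E}[\tilde{\gamma}_L^{N_{0:L-2}}(1)]=\gamma_L(1)$ as established above, the cross term vanishes and
$$
\frac{1}{\gamma_\infty(1)^2}\mathbb{E}\Big[\big(\tilde{\gamma}_L^{N_{0:L-2}}(1)-\gamma_\infty(1)\big)^2\Big]
=\frac{\gamma_L(1)^2}{\gamma_\infty(1)^2}\cdot\frac{1}{\gamma_L(1)^2}\mathbb{E}\Big[\big(\tilde{\gamma}_L^{N_{0:L-2}}(1)-\gamma_L(1)\big)^2\Big]
+\Big(\frac{\gamma_L(1)}{\gamma_\infty(1)}-1\Big)^2 .
$$
By (A\ref{hyp:C})(i) the final term is at most $Ch_L^{2\alpha}$ and $\gamma_L(1)^2/\gamma_\infty(1)^2$ is bounded (it tends to $1$), while the middle factor is exactly the relative variance bounded by Theorem~\ref{theo:1}. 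Inserting (A\ref{hyp:C})(iii), $\|G_{p-1}-1\|_\infty^2\le Ch_p^{\beta}$ (so $\|G_{p-1}-1\|_\infty\le C^{1/2}h_p^{\beta/2}$), into that bound turns the statistical part into $C\big(N_0^{-1}+\sum_{p=2}^{L}(p-1)N_{p-2}^{-1}h_p^{\beta}+\sum_{p=2}^{L}\sum_{q=2}^{p-1}(q-1)N_{q-2}^{-1}h_p^{\beta/2}h_q^{\beta/2}\big)$.

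Next I would reduce this to the canonical multilevel form. Using the geometric decay $h_p\asymp K^{-p}$, the inner geometric sum $\sum_{p>q}h_p^{\beta/2}$ is dominated by its leading term $\asymp h_q^{\beta/2}$, so the double sum is absorbed into the single sum; after reindexing $l=p-2$ and using $h_{l+2}\le h_l$ the statistical error is bounded by $C\big(N_0^{-1}+\sum_{l=0}^{L-2}(l+1)N_l^{-1}h_l^{\beta}\big)$, while (A\ref{hyp:C})(i) gives a total cost $\mathrm{COST}\le C\sum_{l=0}^{L-2}N_lh_l^{-\zeta}$. I would then fix $L$ from the bias constraint $h_L^{2\alpha}\asymp\varepsilon^2$, whence $L\asymp\alpha^{-1}|\log\varepsilon|$, and allocate $\{N_l\}$ by balancing the per-level cost $N_lh_l^{-\zeta}$ against the per-level variance $(l+1)h_l^{\beta}/N_l$ under a total-variance budget of order $\varepsilon^2$. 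A Cauchy--Schwarz (Lagrange) choice $N_l\propto\varepsilon^{-2}\sqrt{(l+1)}\,h_l^{(\beta+\zeta)/2}$ gives $\mathrm{COST}\lesssim\varepsilon^{-2}\big(\sum_{l=0}^{L}\sqrt{l+1}\,h_l^{(\beta-\zeta)/2}\big)^2$, and evaluating this sum produces the trichotomy: it converges for $\beta>\zeta$; it behaves like $L^{3/2}$ at the borderline $\beta=\zeta$, yielding the factor $|\log\varepsilon|^3$; and for $\beta<\zeta$ it is dominated by its finest term $\asymp\sqrt{L}\,h_L^{(\beta-\zeta)/2}\asymp\sqrt{L}\,\varepsilon^{-(\zeta-\beta)/(2\alpha)}$, yielding $\varepsilon^{-(2+(\zeta-\beta)/\alpha)}|\log\varepsilon|$.

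The hard part will be the final constrained optimisation and the precise tracking of the logarithmic factors. One must check that the chosen $\{N_l\}$ can be taken monotone, $N_0\ge\cdots\ge N_{L-2}$, and meet the floor $N_{L-2}>c(L-1)$ demanded by Theorem~\ref{theo:1}, rounding to integers without degrading the budget; it is exactly the polynomial weights $(l+1)$ (together with this sample-size floor over the $L\asymp|\log\varepsilon|$ active levels) that supply the extra logarithmic factor present in every regime, so the delicate bookkeeping is to confirm that the $N_0^{-1}$ contribution and each per-level variance term are simultaneously controlled at the claimed cost. Once the balancing and the floor are fixed, the remaining steps are routine geometric-series estimates.
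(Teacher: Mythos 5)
Your proposal is correct and follows essentially the same route as the paper's proof: a bias--variance split about $\gamma_L(1)$, the bound of Theorem~\ref{theo:1} combined with (A\ref{hyp:C})(iii), collapsing the double sum by interchanging the order of summation and summing the geometric tail, and then the standard MLMC sample-size optimisation with $L\asymp\alpha^{-1}|\log\varepsilon|$, yielding the identical trichotomy. The only divergence is bookkeeping: you retain the polynomial weights $(l+1)$ (hence $N_l\propto\sqrt{l+1}\,h_l^{(\beta+\zeta)/2}$) where the paper bounds them by $L$ and takes $N_l\propto L\varepsilon^{-2}K_L h_l^{(\beta+\zeta)/2}$ --- a choice under which the floor $N_l>c(L-1)$ required by Theorem~\ref{theo:1} holds automatically from $2\alpha\geq\max\{\beta,\zeta\}$; your allocation also meets that floor for small $\varepsilon$ (the worst case $\beta+\zeta=4\alpha$ forces $\beta=\zeta$, where $N_{L-2}\asymp L^2$), so the verification you deferred does go through within the stated cost bounds.
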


We give the proof for Corollary \ref{col:cost1} only. The proof of  Corollary \ref{col:cost} is almost identical. The only
difference is treating the term in the relative variance of 
$$
\sum_{p=0}^{L-1} \Big\|\frac{G_p}{\eta_p(G_p)}-1\Big\|_{\infty}\frac{(p+1)}{N_p^2}
$$
which is smaller than $\mathcal{O}(\epsilon^2)$, under our assumptions.

\begin{proof}[Proof of Corollary \ref{col:cost1}]
The MSE can be bounded by
$$
\frac1{\gamma_{\infty}(1)^2}\mathbb{E}\Big[\Big({\tilde{\gamma}_L^{N_{0:L-2}}(1)}-\gamma_{\infty}(1)\Big)^2\Big] \leq 
$$
$$
\left(\frac{\gamma_L(1)}{\gamma_{\infty}(1)}\right)^2\mathbb{E}\Big[\Big(\frac{\tilde{\gamma}_L^{N_{0:L-2}}(1)}{\gamma_L(1)}-1\Big)^2\Big] 
+ \left |\left(\frac{\gamma_L(1)}{\gamma_{\infty}(1)} - 1\right)\right|^2 .
$$
Following from (A\ref{hyp:C}(i)), 
the second term requires that $h_L^\alpha \eqsim \varepsilon$, and assuming
$h_L = M^{-L}$ for some $M\geq2$, this translates to $L\eqsim \log\varepsilon$.  
Notice that it also follows that $\left(\frac{\gamma_L(1)}{\gamma_{\infty}(1)}\right)^2 = \cO(1)$.
Now, defining $V_p = \|G_{p-1}-1\|_{\infty}^2$, Theorem \ref{theo:1} provides the following bound for the 
first term 
$$
\mathbb{E}\Big[\Big(\frac{\tilde{\gamma}_L^{N_{0:L-2}}(1)}{\gamma_L(1)}-1\Big)^2\Big] \leq V 
:= C\left(\frac{1}{N_0} + L \sum_{p=1}^{L-1} \frac{V_p}{N_{p-1}}\right).  
$$
To see this observe that 
$$
\sum_{p=1}^{L-1} \sum_{q=1}^{p} \frac{q}{N_{q-1}}V_p^{1/2}V_q^{1/2} 
= \sum_{p=1}^{L-1} \frac{p}{N_{p-1}} V_p^{1/2} \sum_{q=p}^{L-1} V_q^{1/2} 
\leq C L \sum_{p=1}^{L-1} \frac{V_p}{N_{p-1}}.
$$
Optimizing the cost, given that the variance is $\mathcal{O}(\varepsilon^2)$, dictates that $N_l \propto \sqrt{LV_l/C_l} \eqsim L^{1/2} h_l^{(\beta+\zeta)/2}$.
The constraint then requires that $N_l\propto L\varepsilon^{-2}K_L h_l^{(\beta+\zeta)/2}$, 
where $K_L = \sum_{l=1}^{L-1} h_l^{(\beta-\zeta)/2}$. 
By assumption ${\rm max}\{\beta,\zeta\} \leq 2 \alpha$, so $(\beta+\zeta)/2\alpha \leq 2$
and the 
requirement for all the $N_l$ in Theorem \ref{theo:1} is guaranteed
(as long as the proportionality constant is greater than 1). 
Therefore, the MSE is controlled by $\cO(\varepsilon^2)$ with a cost given by 
$$
\sum_{l=0}^L N_l C_l \eqsim L \varepsilon^{-2} K_L^{2} \, ,
$$
and the result follows.

\end{proof}

\begin{rem}
If one were able to perform i.i.d.~sampling from $\eta_0$ (denote the samples $u^1,\dots,u^N$), with estimator
$$
\frac{1}{N}\sum_{i=1}^N \frac{\gamma_L(u^i)}{\gamma_0(u^i)}
$$
for $Z_L/Z_0$
a computational effort proportional to $N h_L^{-\zeta}$ is used,
with $N$ the number of simulated samples. 
To make the overall error (bias squared plus variance) of using i.i.d.~sampling
$\mathcal{O}(\epsilon^2)$ then one must take $N\propto\mathcal{O}(\epsilon^{-2})$, as the variance of the MC estimate is $\mathcal{O}(N^{-1})$, independently of $L$. This is a computational cost of $\mathcal{O}(\epsilon^{-2}h_L^{-\zeta})$ is used which is far worse than MLSMC samplers in most cases of practical
interest. 
\end{rem}


\section{Numerical Example}
\label{sec:numerics}

\subsection{Setup}

The performance of the proposed estimator will be demonstrated by a Bayesian
inverse problem example. The same example was also used in \cite{besk:15},
which introduced the MLSMC algorithm.

Introduce the Gelfand triple $V := H^1(D) \subset L^2(D) \subset H^{-1}(D) =:
V^{*}$, where the domain $D$ will be understood.  Let $D \subset \bbR^d$ with
$\partial D\in C^1$ convex. For $f \in V^{*}$, consider the following PDE on
$D$:
\begin{align}
  -\nabla \cdot (\hat{u}\nabla p) &= f, \qquad\text{on }D,
  \label{eq:elliptic} \\
  p &= 0, \qquad \text{on }  \partial D,
\end{align}
where
\begin{equation}
  \hat{u}(x) = \bar{u}(x) + \sum_{k = 1}^K u_k \sigma_k \phi_k(x).
\end{equation}
Define $u = \{u_k\}_{k = 1}^K$, with $u_k \stackrel{\textrm{i.i.d.}}{\sim} \mathcal{U}[-1,1]$ (the uniform distribution on [-1,1]). 
Assume that
$\bar{u}, \phi_k\in C^{\infty}$ for all $k$ and $\|\phi_k\| = 1$. In
particular $\{\sigma_k\}_{k=1}^K$ decay with $k$. In addition, the following
property shall hold:
\begin{equation}
  \inf_{x}\hat{u}(x) \ge \inf_{x}\bar{u}(x) - \sum_{k = 1}^K \sigma_k \ge u_{*}
  > 0
\end{equation}
so that the operator on the left-hand side of Equation~\eqref{eq:elliptic} is
uniformly elliptic. Let $p(\cdot;u)$ denote the weak solution of
Equation~\eqref{eq:elliptic} for parameter $u$. Define the following
vector-valued function
\begin{equation*}
  \mathcal{G}(p) = [g_1(p),\dots,g_M(p)]^\mathsf{T},
\end{equation*}
where $g_m$ are elements of the dual space $V^*$ for $m = 1,\dots,M$. It is
assumed that the data take the form
\begin{equation}
  y = \mathcal{G}(p) + \xi, \qquad \xi\sim\mathcal{N}(0,\Xi),
\end{equation}
where $\mathcal{N}(0,\Xi)$ denotes the Normal distribution with zero mean
and covariance $\Xi$.

The specific setting of the simulations are as the following. Let $D = [0,1]$
and $f(x) = 100x$. Set $K = 50$, $\bar{u}(x) = 0.15 = \text{const.}$, $\sigma_k
= (2/5)4^{-k}$ $\phi_k(x) = \sin(k\pi x)$ if $k$ is odd and $\phi_k(x) =
\cos(k\pi x)$ if $k$ is even. The forward problem at resolution level $l$ is
solved using a finite element method with piecewise linear shape functions on a
uniform mesh of with $h_l = 2^{-(l + k)}$, for some starting $k \ge 1$ (so that
there are at least two grid-blocks in the coarsest, $l = 0$, case). Thus, on
level $l$ the finite element basis functions are
$\{\psi_i^l\}_{i=1}^{2^{l+k}-1}$ defined as (for $x_i = i\cdot2^{-(l + k)}$):
\begin{equation*}
  \psi_i^l(x) =
  \begin{cases}
    (1/h_l)[x - (x_i - h_l)] \qquad \text{if } x\in[x_i - h_l, x_i], \\
    (1/h_l)[(x_i + h_l) - x] \qquad \text{if } x\in[x_i, x_i + h_l].
  \end{cases}
\end{equation*}
The function of interest $g$ is taken as the solution of the forward problem at
the midpoint of the domain, that is $g(u) = p(0.5;u)$. The observation operator
is $\mathcal{G}(u) = [p(0.25;u),p(0.75;u)]^\mathsf{T}$, and the observational
noise covariance is taken to be $\Xi = 0.25^2I$. 

Detailed error rates analysis of this example can be found in \cite{besk:15}.
In particular, when the purpose of the study was to estimate $\eta_L(g)$, the
variance rate was $\beta = 4$ empirically. Later we will show that for
estimating the normalizing constant, the variance rate is very similar.

\subsection{Verification of Assumptions}

Assumptions (A\ref{hyp:A}) and (A\ref{hyp:C}(i)(iii)) (for $|\frac{\gamma_p(1)}{\gamma_\infty(1)}-1 |$), with $\beta=2\alpha=2$,
follow from Proposition 4.1 of \cite{besk:15}.  For (A\ref{hyp:C}(ii)) this follows directly from proving (A\ref{hyp:C}(iii)).
It is natural to model the cost
at level $p$ by a power of the number degrees of freedom, which is in turn
related to $h_p^{-1}$, verifying (A\ref{hyp:C}(i)) (for ${\rm C} (G_{p-1})$).  The stiffness matrix of
the finite element method is tridiagonal and thus the system can be solved with
cost $\mathcal{O}(2^{l + k})$, corresponding to a computational cost rate of
$\zeta = 1$.  Assumption (A\ref{hyp:B}) is verified for Gibbs sampler in
section 4.2 of \cite{besk:15}.

\subsection{Experiments}

We begin by using the theoretical rates $\beta = 2\alpha = 2$ to estimate the MSE
and hence the cost ratio. Three cases are considered:
\begin{itemize}
\item{A standard SMC
algorithm, with the estimator $\gamma_l^{N_{0:l-1}}(1)$.}
\item{MLSMC sampler for $\gamma_l^{N_{0:l-1}}(1)$.}
\item{MLSMC sampler for $\tilde{\gamma}_l^{N_{0:l-2}}(1)$.}
\end{itemize}
The cost vs. MSE
is plotted in Figure~\ref{fig:elliptic_ncl}. The cost rates are $-1.271$,
$-0.967$, and $-1.038$ for the SMC, MLSMC with the standard estimator, and
MLSMC with the new estimator, respectively. It is clear that the MLSMC
algorithm with both estimators provides superior performance when compared to
the standard SMC algorithm. It is interesting that for the given MLSMC ensemble, 
the performance of the new estimator is comparable
to that of the standard estimator, as proven in Corollaries \ref{col:cost} and \ref{col:cost1}. 
It shall be noted that in practice, given the same samples ($U_0^{1:N_0},\dots,U_{L - 1}^{1:N_{L-1}})$, 
the new estimator is capable of estimating $\gamma_{L + 1}(1)$ while the standard one can only
estimate the $\gamma_L(1)$, which has a higher bias.

\begin{figure}
  \includegraphics[width=\linewidth]{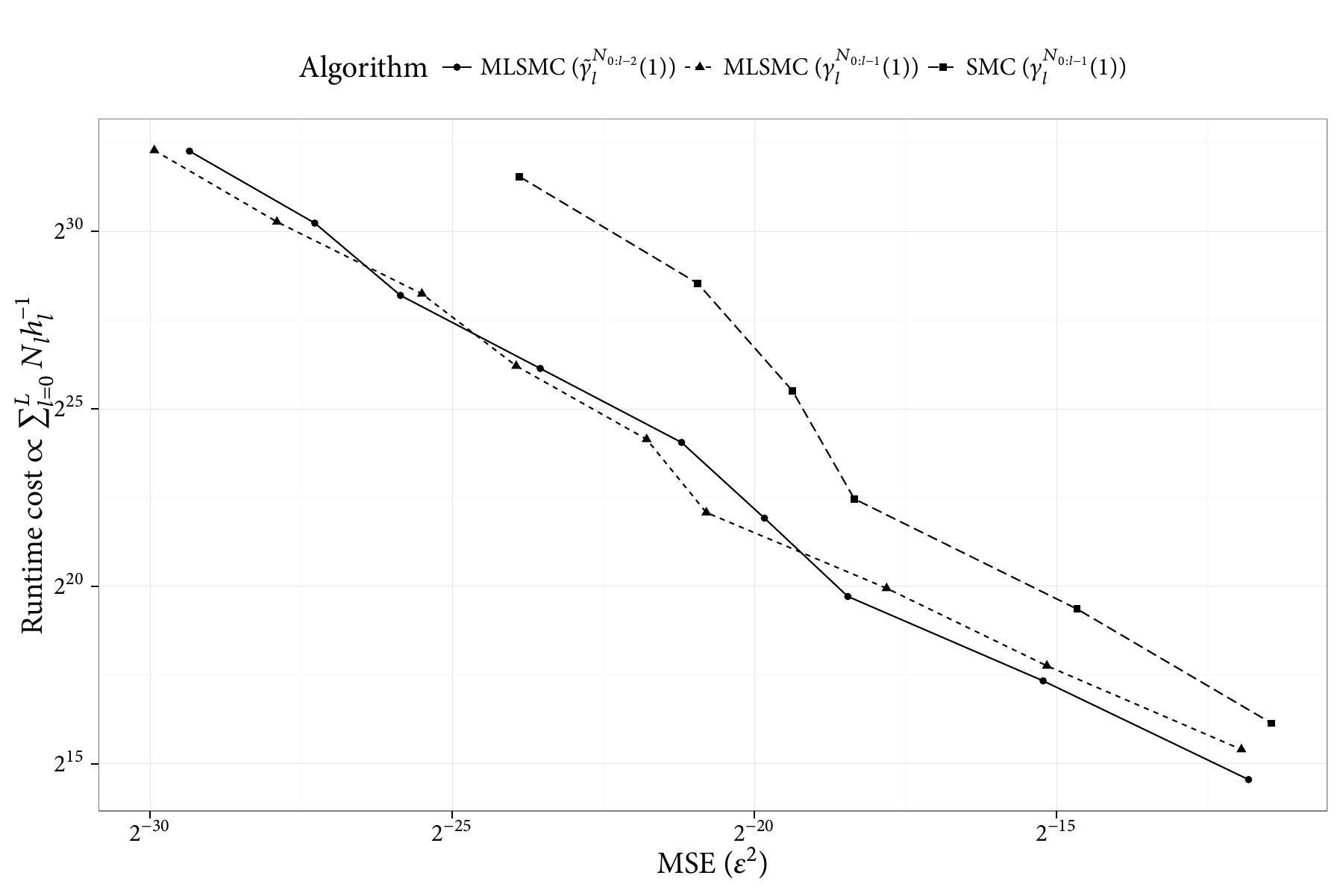}
  \caption{Computational cost against mean squared error}
  \label{fig:elliptic_ncl}
\end{figure}

The variance rate $\beta$ can also be estimated empirically by consider the
variance of $\eta_l(G_l)$. The quantity, multiplied by the sample size, as a
proxy of $V_l$ is plotted in Figure~\ref{fig:elliptic_ncv}. The estimated
empirical rate is $\beta = 4.148$. This is consistent with the rate estimates
in \cite{besk:15}.

\begin{figure}
  \includegraphics[width=\linewidth]{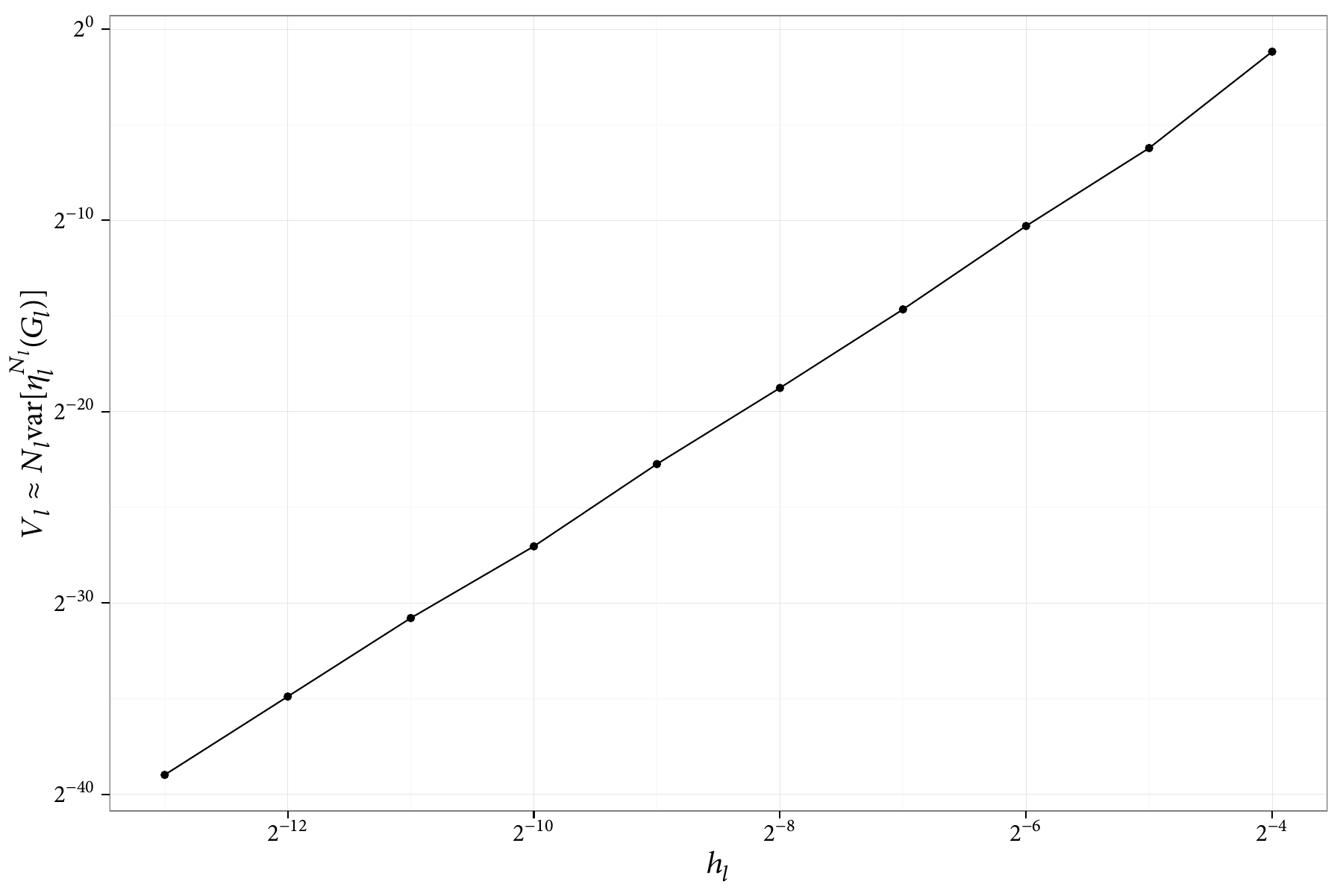}
  \caption{Variance rate estimate}
  \label{fig:elliptic_ncv}
\end{figure}

\subsubsection*{Acknowledgements}  KJHL was supported in part by DARPA FORMULATE 
and in part by ORNL LDRD Strategic Hire.
AJ \& YZ were supported by Ministry of Education AcRF tier 2 grant, R-155-000-161-112.

\appendix

\section{Notations}

We give a collection of defintions which are used in the appendices.
Let $n\geq 0$, $F\in\mathcal{B}_b(E\times E)$ and define
$$
(\gamma_n^{N_{0:n}})^{\otimes 2}(F) = \Big(\prod_{p=0}^{n-1}\eta_p^{N_p}(G_p)\Big)^2
(\eta_n^{N_n})^{\otimes 2}(F)
$$
where for a finite (possibily signed) measure on $E$, $\mu$, $\mu^{\otimes 2}(d(u_1,u_2)) = \mu(du_1)\mu(du_2)$.
We recall the semi-group for $p\leq n$ (for $p=n$ it is the identity operator):
$$
Q_{p,n}(x_p,dx_n) = \int_{E^{n-p-1}} Q_{p+1}(x_{p},dx_{p+1})\dots Q_n(x_{n-1},dx_n)
$$
where for $n\geq 1$, $Q_n(x,dy) = G_{n-1}(x) M_n(x,dy)$. We also define the coalescent operator for $F\in\mathcal{B}_b(E\times E)$, $(x,y)\in E\times E$:
$$
C(F)(x,y) = F(x,x).
$$
Then for $0\leq s \leq (n+1)$, $0\leq i_1< \cdots < i_n\leq n$, $F\in\mathcal{B}_b(E\times E)$
$$
\Gamma_n^{i_1:i_s}(F) = \gamma_{i_1}^{\otimes 2}CQ_{i_1,i_2}^{\otimes 2}CQ_{i_2,i_3}^{\otimes 2}\dots C Q_{i_s,n}^{\otimes 2}(F)
$$
and 
$$
\overline{\Gamma}_n^{i_1:i_s}(F) = \frac{1}{\gamma_n(1)^2}\Gamma_n^{i_1:i_s}(F).
$$
The conventions, for $s=0$, $\Gamma_n^{\emptyset}(F) = \gamma_n^{\otimes 2}(F)$ and 
$\overline{\Gamma}_n^{\emptyset}(F) = \eta_n^{\otimes 2}(F)$ are adopted.
Recall the selection-mutation operator for any $\mu\in\mathcal{P}(E)$, $n\geq 1$
$$
\Phi_n(\mu)(dx) = \frac{\mu(G_{n-1}M_n(\cdot,dx))}{\mu(G_{n-1})}.
$$
$\mathscr{F}_{n}^{N_{0:n}}$ denotes the natural filtration generated by the particle system up-to time $n$.
For $f_1,f_2\in\mathcal{B}_b(E)$ we write the tensor product of functions for every $(x,y)\in E\times E$:
$$
f_1\otimes f_2(x,y) = f_1(x)f_2(y).
$$

\section{Proofs for Theorem \ref{theo:2}}\label{app:proof_new}

\begin{lemma}\label{lem:new}
Assume (A\ref{hyp:A}-\ref{hyp:B}). Then there exist a $C<+\infty$ such that for any $0\leq p \leq n$, $x\in E$:
$$
\Big|\frac{Q_{p,n}(1)(x)}{\prod_{q=p}^{n-1}\eta_q(G_q)}-1
\Big| \leq C \sum_{q=p}^{n-1} \Big\|\frac{G_q}{\eta_q(G_q)}-1\Big\|_{\infty}
$$
\end{lemma}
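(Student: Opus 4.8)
The plan is to work with the normalized quantity
$$R_{p,n}(x) := \frac{Q_{p,n}(1)(x)}{\prod_{q=p}^{n-1}\eta_q(G_q)},$$
whose deviation from $1$ is exactly the left-hand side, and to set $\delta_q := G_q/\eta_q(G_q)-1$ so that $\|\delta_q\|_\infty$ is the summand on the right. First I would record two structural facts. Using $\gamma_n=\gamma_p Q_{p,n}$ and $\gamma_n(1)/\gamma_p(1)=\prod_{q=p}^{n-1}\eta_q(G_q)$ one gets $\eta_p Q_{p,n}(1)=\prod_{q=p}^{n-1}\eta_q(G_q)$, hence the normalization $\eta_p(R_{p,n})=1$. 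Second, from $Q_{p,n}=Q_{p+1}Q_{p+1,n}$ and $Q_{p+1}(x,\cdot)=G_p(x)M_{p+1}(x,\cdot)$ one obtains the semigroup recursion $R_{p,n}=(1+\delta_p)\,M_{p+1}(R_{p+1,n})$, with the convention $R_{n,n}=1$ (the case $p=n$ being trivial).

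The crux of the argument, and the step I expect to be the main obstacle, is a uniform, $n$-independent two-sided bound on $R_{p,n}$; without it the recursion only yields the geometrically growing factor $\prod_{q}(1+\|\delta_q\|_\infty)-1$ rather than a linear estimate. Here I would use (A\ref{hyp:B}) exactly once, at the top level. Writing $Q_{p,n}(1)(x)=G_p(x)\,M_{p+1}(h)(x)$ with $h:=Q_{p+1,n}(1)\ge 0$, the minorization gives $M_{p+1}(h)(x)\le \rho^{-1}M_{p+1}(h)(y)$ for all $x,y$, while $G_p=\kappa_{p+1}/\kappa_p$ together with (A\ref{hyp:A}) gives $G_p(x)/G_p(y)\le \overline{C}^2/\underline{C}^2$. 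Taking the ratio yields
$$\frac{Q_{p,n}(1)(x)}{Q_{p,n}(1)(y)}\le \frac{\overline{C}^2}{\rho\,\underline{C}^2}=:\kappa$$
uniformly in $p,n,x,y$. Combined with $\eta_p(R_{p,n})=1$ (so $\inf_y R_{p,n}(y)\le 1$), this gives $\|R_{p,n}\|_\infty\le \kappa$ for every $p\le n$. It is precisely this single use of (A\ref{hyp:B}) that tames the horizon dependence and replaces the product bound by one linear in $\sum\|\delta_q\|_\infty$ with a constant independent of $n-p$.

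With the uniform bound in hand the proof closes by a short induction on $n-p$. From the recursion and $M_{p+1}1=1$,
$$R_{p,n}-1=\delta_p\,M_{p+1}(R_{p+1,n})+M_{p+1}(R_{p+1,n}-1).$$
Taking the supremum norm, bounding $\|\delta_p M_{p+1}(R_{p+1,n})\|_\infty\le \|\delta_p\|_\infty\|R_{p+1,n}\|_\infty\le \kappa\|\delta_p\|_\infty$ via the uniform bound, and using the Markov contraction $\|M_{p+1}(R_{p+1,n}-1)\|_\infty\le \|R_{p+1,n}-1\|_\infty$, gives $\|R_{p,n}-1\|_\infty\le \kappa\|\delta_p\|_\infty+\|R_{p+1,n}-1\|_\infty$. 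Since $\|R_{n,n}-1\|_\infty=0$, telescoping yields $\|R_{p,n}-1\|_\infty\le \kappa\sum_{q=p}^{n-1}\|\delta_q\|_\infty$, which is the claim with $C=\overline{C}^2/(\rho\,\underline{C}^2)$. Beyond elementary manipulations, the only ingredients are the mean-one normalization, the single application of (A\ref{hyp:B}), and the sup-norm contractivity of Markov kernels.
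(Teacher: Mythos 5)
Your proof is correct and is essentially the paper's own argument: you use the same recursion $R_{p,n}-1=\delta_p\,M_{p+1}(R_{p+1,n})+M_{p+1}(R_{p+1,n}-1)$ (with $\delta_q = G_q/\eta_q(G_q)-1$) and the same backward induction/telescoping in sup norm, exploiting the Markov contraction of $M_{p+1}$. The only departure is that where the paper invokes \cite[Lemma 4.1]{cdg:11} for the uniform bound \eqref{eq:h_cont} on the normalized semigroup, you derive that bound directly from (A\ref{hyp:A})--(A\ref{hyp:B}) via the minorization together with the normalization $\eta_p(R_{p,n})=1$, which makes the argument self-contained and gives the explicit constant $C=\overline{C}^2/(\rho\,\underline{C}^2)$.
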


\begin{proof}
We fix $n,p$ and note that the case $p=n$ is trivial, so we suppose $p<n$.
We prove the result by induction. We consider $p=n-1$ and thus
$$
\frac{Q_{p,n}(1)(x)}{\prod_{q=p}^{n-1}\eta_q(G_q)}-1 = \frac{G_{n-1}(x)}{\eta_{n-1}(G_{n-1})}-1
$$
so the initialization is proved. Suppose the result holds at rank $p$ and consider the case $p-1$. We have
$$
\frac{Q_{p-1,n}(1)(x)}{\prod_{q=p-1}^{n-1}\eta_q(G_q)}-1 = 
$$
$$
\Big(\frac{G_{p-1}(x)}{\eta_{p-1}(G_{p-1})} - 1\Big)M_p\Big(\frac{Q_{p,n}(1)(x)}{\prod_{q=p}^{n-1}\eta_q(G_q)}\Big)(x) + 
M_p\Big(\frac{Q_{p,n}(1)(x)}{\prod_{q=p}^{n-1}\eta_q(G_q)}-1\Big)(x).
$$
By \cite[Lemma 4.1]{cdg:11}
\begin{equation}
\frac{Q_{p,n}(1)(x)}{\prod_{q=p}^{n-1}\eta_q(G_q)}
\leq C\label{eq:h_cont}
\end{equation}
where $C$ does not depend upon $p,n$. Thus, by applying the induction hypothesis and the above result it follows that:
$$
\Big|\frac{Q_{p-1,n}(1)(x)}{\prod_{q=p-1}^{n-1}\eta_q(G_q)}-1
\Big| \leq C \sum_{q=p-1}^{n-1} \Big\|\frac{G_q}{\eta_q(G_q)}-1\Big\|_{\infty}
$$
and hence the proof is completed.
\end{proof}

The result below follows one in \cite{cdg:11}.

\begin{prop}\label{prop:1}
Assume (A\ref{hyp:A}-\ref{hyp:B}). Then there exists a $C<+\infty$ such that for any $n\geq 0$, $F\in\mathcal{B}_b(E\times E)$ and $N_0\geq\cdots\geq N_n > c(n+1)$
$$
\Big|
\mathbb{E}\Big[\frac{(\gamma_n^{N_{0:n}})^{\otimes 2}(F)}{\gamma_n(1)^2}\Big] - 
\eta_n^{\otimes 2}(F)
\Big| \leq 8c\|F\|_{\infty}\sum_{p=0}^n \frac{1}{N_p}.
$$
\end{prop}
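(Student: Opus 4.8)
The plan is to follow the coalescent-tree analysis of \cite{cdg:11}. The starting point is the exact many-body representation of the second moment of the unnormalised estimator. Since, conditionally on the previous generation, the $N_p$ particles at level $p$ are drawn independently from $\Phi_p(\eta_{p-1}^{N_{p-1}})$, a pair of particles traced back through the genealogy either keeps distinct ancestors or coalesces at some generation, the latter event having conditional probability of order $1/N_p$. Expanding the tensor product over all coalescence patterns yields an identity of the form
\begin{equation*}
\mathbb{E}\big[(\gamma_n^{N_{0:n}})^{\otimes 2}(F)\big] = \sum_{s=0}^{n}\ \sum_{0\le i_1<\cdots<i_s\le n} a_N(i_1,\ldots,i_s)\,\Gamma_n^{i_1:i_s}(F),
\end{equation*}
where the coefficients $a_N$ are products over generations of local coalescence weights (each of size $\mathcal{O}(1/N_p)$ at the coalescence times $i_1,\ldots,i_s$) and non-coalescence factors bounded by $1$. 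This is the tensor-product counterpart of the unbiasedness identity of \cite{delm:04} for $\gamma_n^{N_{0:n}}(1)$, and it is exactly where the operators $C$ and $Q_{p,n}^{\otimes 2}$ entering $\Gamma_n^{i_1:i_s}$ originate: $C$ records a coalescence onto the diagonal, while $Q_{p,n}^{\otimes 2}$ propagates the two lineages between consecutive coalescences.

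Dividing by $\gamma_n(1)^2$ and passing to the normalised measures $\overline{\Gamma}_n^{i_1:i_s}$, the term $s=0$ equals $a_N(\emptyset)\,\eta_n^{\otimes 2}(F)$ with $a_N(\emptyset)=\prod_{p=0}^n(1-\mathcal{O}(1/N_p))$. Subtracting $\eta_n^{\otimes 2}(F)$, the bias splits into the defect $(1-a_N(\emptyset))\,\eta_n^{\otimes 2}(F)$ and the sum over all patterns with $s\ge 1$. Both are controlled once we have, first, a uniform bound on the normalised coalescent terms and, second, a bound on the total weight carried by the patterns with at least one coalescence.

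For the first ingredient I would show $|\overline{\Gamma}_n^{i_1:i_s}(F)|\le C^{\,s+1}\|F\|_\infty$ uniformly in $n$ and in the indices. Each such term is a normalised Feynman-Kac average of $F$ along the two lineages, and after each application of $C$ the remaining factors are precisely normalised semigroups of the type $Q_{p,q}(1)/\prod_{r}\eta_r(G_r)$; by \eqref{eq:h_cont} (equivalently Lemma \ref{lem:new} and \cite[Lemma 4.1]{cdg:11}) these are bounded above and below uniformly, and the $\rho$-mixing furnished by (A\ref{hyp:B}) together with the boundedness of the potentials from (A\ref{hyp:A}) keeps the constant independent of $n$. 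The coalescent operator $C$ only restricts to the diagonal, so it preserves the sup-norm bound.

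The main obstacle is the combinatorial step: one must sum $|a_N(i_1,\ldots,i_s)|\,C^{\,s+1}$ over all $s\ge 1$ and all $0\le i_1<\cdots<i_s\le n$ and show the total is at most $8c\|F\|_\infty\sum_{p=0}^n 1/N_p$. The number of patterns of depth $s$ grows like $\binom{n+1}{s}$, but each coalescence weight is $\mathcal{O}(1/N_p)$, so the depth-$s$ contribution is of order $\binom{n+1}{s}(C/\min_p N_p)^s$; the hypothesis $N_p> c(n+1)$ is exactly what renders the resulting series geometric, i.e. $\sum_{s\ge 1}(C(n+1)/\min_p N_p)^s$ converges and is dominated by its first term once $c$ is chosen large enough that the ratio is at most $1/2$. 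The single-coalescence ($s=1$) contribution produces the factor $\sum_p 1/N_p$, the higher orders are absorbed into the constant, and the defect $(1-a_N(\emptyset))$ is itself $\mathcal{O}(\sum_p 1/N_p)$ and hence absorbed as well; summing the geometric series in $s$ yields the constant $8c$. I would carry this out by bounding the sum at fixed coalescence depth first, then summing over $s$.
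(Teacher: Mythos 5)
Your proposal takes essentially the same route as the paper's proof: the coalescent (``many-body'') expansion of $\mathbb{E}[(\gamma_n^{N_{0:n}})^{\otimes 2}(F)]$ over coalescence patterns from \cite{cdg:11}, a bound on the normalised coalescent terms $\overline{\Gamma}_n^{i_1:i_s}$ that is geometric in the number $s$ of coalescences (the paper gets $(\rho\,\overline{C}/\underline{C})^s$ by citing Theorem 5.1 of \cite{cdg:11}), and a summation over patterns exploiting $N_p>c(n+1)$. The only organisational difference is that the paper invokes the identity applied to the centred function $F-\eta_n^{\otimes 2}(F)$, which kills the $s=0$ term, whereas you keep $F$ uncentred and separately control the defect $(1-a_N(\emptyset))\,\eta_n^{\otimes 2}(F)$; that is a harmless rearrangement of the same ingredients.

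There is, however, one step that fails as written: the combinatorial summation. Bounding the depth-$s$ contribution by $\binom{n+1}{s}(C/\min_p N_p)^s$ and dominating the resulting geometric series by its first term produces a bound of order $(n+1)/\min_p N_p=(n+1)/N_n$, not $\sum_{p=0}^n 1/N_p$; and your claim that the higher-order ($s\ge 2$) terms are ``absorbed into the constant'' does not repair this, because those terms are of order $\bigl((n+1)/N_n\bigr)^2$, which is not $\mathcal{O}\bigl(\sum_{p=0}^n 1/N_p\bigr)$ when the $N_p$ decrease sharply (take $N_0$ enormous and $N_n\approx c(n+1)$: then $\sum_p 1/N_p\approx 1/N_n$ while $\bigl((n+1)/N_n\bigr)^2\approx 1/c^2$ exceeds it by a factor of order $n+1$). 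Since the proposition asserts the bound $8c\|F\|_\infty\sum_{p=0}^n 1/N_p$, passing to $\min_p N_p$ loses exactly what is claimed. The fix is to keep the product structure of the coefficients, as the paper does:
$$
\sum_{s=1}^{n+1}\ \sum_{0\le i_1<\cdots<i_s\le n}\ \prod_{k=1}^{s}\frac{C}{N_{i_k}}
=\prod_{p=0}^{n}\Bigl(1+\frac{C}{N_p}\Bigr)-1 ,
$$
or, equivalently, bound the depth-$s$ pattern sum by $\frac{1}{s!}\bigl(\sum_{p=0}^n C/N_p\bigr)^{s}$. Since $\sum_{p=0}^n 1/N_p\le (n+1)/N_n<1/c$, one factor of $\sum_p 1/N_p$ can then be extracted from every term and the remaining series summed into an absolute constant, yielding the stated bound. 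With this replacement (and noting that the pattern depth runs up to $s=n+1$, not $s=n$), your argument coincides with the paper's.
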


\begin{proof}
The case with $F$ constant essentially follows from the proofs of \cite{cdg:11}. The only
difference is the fact that we have a decreasing number of samples; 
this does not change the calculations of that paper, so the case of $F$ constant is in \cite{cdg:11}. If $F$ is a non-constant function, one has, from the equation above Proposition 3.4 (page 638) of \cite{cdg:11}:
$$
\Big|\mathbb{E}\Big[\frac{(\gamma_n^{N_{0:n}})^{\otimes 2}(F)}{\gamma_n(1)^2}\Big] - 
\eta_n^{\otimes 2}(F)\Big| = 
$$
$$
\Big|
\sum_{s=1}^{n+1} \sum_{0\leq i_1<\cdots<i_s\leq n}
\Big(\prod_{k=1}^{s}\frac{1}{N_{i_k}}\Big)
\Big(\prod_{k\notin\{i_1,\dots,i_s\}}\big(1-\frac{1}{N_k}\big)
\Big)\overline{\Gamma}_n^{i_1:i_s}(F-\eta_n^{\otimes 2}(F))
\Big|.
$$
Following the proof of Theorem 5.1 of \cite{cdg:11} and noting that one can allow
the function in that paper to be negative, it follows that
$$
|\overline{\Gamma}_n^{i_1:i_s}(F-\eta_n^{\otimes 2}(F))| \leq 
\|F-\eta_n^{\otimes 2}(F)\|_{\infty} \Big(\rho \frac{\overline{C}}{\underline{C}}\Big)^s
\leq 2\|F\|_{\infty}\Big(\rho \frac{\overline{C}}{\underline{C}}\Big)^s.
$$
Thus one has
$$
\Big|\mathbb{E}\Big[\frac{(\gamma_n^{N_{0:n}})^{\otimes 2}(F)}{\gamma_n(1)^2}\Big] - 
\eta_n^{\otimes 2}(F)\Big| \leq
2\|F\|_{\infty}\sum_{s=1}^{n+1} \sum_{0\leq i_1<\cdots<i_s\leq n}
\Big(\prod_{k=1}^{s}\frac{1}{N_{i_k}}\Big)
\Big(\rho \frac{\overline{C}}{\underline{C}}\Big)^s.
$$
Note that 
$$
\sum_{s=1}^{n+1} \sum_{0\leq i_1<\cdots<i_s\leq n}
\Big(\prod_{k=1}^{s}\frac{1}{N_{i_k}}\Big)
\Big(\rho \frac{\overline{C}}{\underline{C}}\Big)^s = 
\prod_{s=0}^n \Big(1+\rho \frac{\overline{C}}{\underline{C}}\frac{1}{N_s}\Big) - 1 \, ,
$$
and for $N_0 > C(n+1),\dots,N_n > C(n+1)$
$$
\prod_{s=0}^n \Big(1+\rho \frac{\overline{C}}{\underline{C}}\frac{1}{N_s}\Big) - 1 \leq
2\rho \frac{\overline{C}}{\underline{C}}\sum_{p=0}^n \frac{1}{N_p} \, ,
$$
(see for instance the proofs of Theorem 5.1 and Corollary 5.2 of \cite{cdg:11}).
It follows that
$$
\Big|
\mathbb{E}\Big[\frac{(\gamma_n^{N_{0:n}})^{\otimes 2}(F)}{\gamma_n(1)^2}\Big] - 
\eta_n^{\otimes 2}(F)
\Big| \leq 8C\|F\|_{\infty}\sum_{p=0}^n \frac{1}{N_p} \, ,
$$
with $C=\rho \frac{\overline{C}}{\underline{C}}$; the proof is concluded.
\end{proof}

\begin{proof}[Proof of Theorem \ref{theo:2}]
Throughout the proof $C<+\infty$ is a constant whose value may change from line-to-line.
It will not depend on the level index.
By \cite[Proposition 2.3]{schweizer:12}
\begin{equation}
\mathbb{E}\Big[\Big(\frac{\gamma_L^{N_{0:L-1}}(1)}{\gamma_L(1)} - 1 \Big)^2\Big] = 
\sum_{p=0}^{L-1}\frac{1}{N_p}\mathbb{E}[T_{p,L}^{N_{0:p}}]\label{eq:schweizer}
\end{equation}
where
$$
T_{p,L}^{N_{0:p}} = \Big(\frac{\gamma_{p}^{N_{0:p-1}}(1)}{\gamma_{p}(1)}\Big)^2
\Big(\eta_p^{N_p}(h_{p,L}^2) - \eta_p^{N_p}(h_{p,L})^2 + 
\eta_p^{N_p}(h_{p,L}) \eta_p^{N_p}\big(\frac{G_p}{\eta_p(G_p)}-1\big)
\Big)
$$
and we use the short-hand for $0\leq p \leq n$, $x\in E$:
$$
h_{p,n}(x) = \frac{Q_{p,n}(1)(x)}{\prod_{q=p}^{n-1}\eta_q(G_q)}.
$$
Now, one has almost surely that
$$
T_{p,L}^{N_{0:p}} = \Big(\frac{\gamma_{p}^{N_{0:p-1}}(1)}{\gamma_{p}(1)}\Big)^2 \times
$$
$$
\Big(\eta_p^{N_p}([h_{p,L}-1]^2) - \eta_p^{N_p}(h_{p,L}-1)^2 + 
\eta_p^{N_p}(h_{p,L}) \eta_p^{N_p}\big(\frac{G_p}{\eta_p(G_p)}-1\big)
\Big).
$$
As 
\begin{eqnarray}
\mathbb{E}[T_{p,L}^{N_{0:p}}] & = & \mathbb{E}\Bigg[
\Big(\frac{\gamma_{p}^{N_{0:p-1}}(1)}{\gamma_{p}(1)}\Big)^2 
\Big(\eta_p^{N_p}([h_{p,L}-1]^2) - \eta_p^{N_p}(h_{p,L}-1)^2\Big)
\Bigg] + \nonumber
\\
& &
\mathbb{E}\Bigg[
\Big(\frac{\gamma_{p}^{N_{0:p-1}}(1)}{\gamma_{p}(1)}\Big)^2 
\eta_p^{N_p}(h_{p,L}) \eta_p^{N_p}\big(\frac{G_p}{\eta_p(G_p)}-1\big)
\Bigg]\label{eq:theo1_main}
\end{eqnarray}
we will consider controlling the two terms on the R.H.S.~of \eqref{eq:theo1_main}
separately.

\noindent\textbf{First term on the R.H.S.~of} \eqref{eq:theo1_main}.\\
We have, almost surely that
\begin{eqnarray*}
\eta_p^{N_p}([h_{p,L}-1]^2) - \eta_p^{N_p}(h_{p,L}-1)^2 & \leq &
C\|h_{p,L}-1\|_{\infty}^2 \\
& \leq & C\Big(\sum_{q=p}^{L-1}\Big\|\frac{G_q}{\eta_p(G_q)}-1\Big\|_{\infty}\Big)^2
\end{eqnarray*}
where we have applied Lemma \ref{lem:new} 
to go to the second line.
Then by Proposition \ref{prop:1} as $N_0 > cL,\dots,N_{L-1} > cL$
$$
\mathbb{E}\Big[\Big(\frac{\gamma_{p}^{N_{0:p-1}}(1)}{\gamma_{p}(1)}\Big)^2\Big] \leq C.
$$
So we have shown that
$$
\mathbb{E}\Bigg[
\Big(\frac{\gamma_{p}^{N_{0:p-1}}(1)}{\gamma_{p}(1)}\Big)^2 
\Big(\eta_p^{N_p}([h_{p,L}-1]^2) - \eta_p^{N_p}(h_{p,L}-1)^2\Big)
\Bigg] \leq
$$
\begin{equation}
C\Big(\sum_{q=p}^{L-1}\Big\|\frac{G_q}{\eta_p(G_q)}-1\Big\|_{\infty}\Big)^2.
\label{eq:theo1_supp1}
\end{equation}

\noindent\textbf{Second term on the R.H.S.~of} \eqref{eq:theo1_main}.\\
We have  almost surely that
$$
\Big(\frac{\gamma_{p}^{N_{0:p-1}}(1)}{\gamma_{p}(1)}\Big)^2 
\eta_p^{N_p}(h_{p,L}) \eta_p^{N_p}\big(\frac{G_p}{\eta_p(G_p)}-1\big)
 = 
$$
$$
\frac{1}{\gamma_p(1)^2}
(\gamma_p^{N_{0:p}})^{\otimes 2}\Big(h_{p,L}\otimes \Big(\frac{G_p}{\eta_p(G_p)}-1\Big)\Big)
$$
and note that, 
$$
\eta_p^{\otimes 2}\Big(h_{p,L}\otimes \Big(\frac{G_p}{\eta_p(G_p)}-1\Big)\Big) = 0.
$$
So by Proposition \ref{prop:1} as $N_0 > cL,\dots,N_{L-1} > cL$ and \eqref{eq:h_cont}
\begin{equation}
\Bigg|\mathbb{E}\Bigg[
\Big(\frac{\gamma_{p}^{N_{0:p-1}}(1)}{\gamma_{p}(1)}\Big)^2 
\eta_p^{N_p}(h_{p,L}) \eta_p^{N_p}\big(\frac{G_p}{\eta_p(G_p)}-1\big)\Bigg]\Bigg|
\leq C\Big\|\frac{G_p}{\eta_p(G_p)}-1\Big\|_{\infty}\frac{(p+1)}{N_p}.
\label{eq:theo1_supp2}
\end{equation}
Combining \eqref{eq:schweizer} with \eqref{eq:theo1_main} and after applying the triangular inequality, the bounds \eqref{eq:theo1_supp1} and 
\eqref{eq:theo1_supp2} complete the proof.

\end{proof}

\section{Proofs for Theorem \ref{theo:1}}
\label{app:proof}

Some of the proofs in this Section will use Proposition \ref{prop:1} in Appendix \ref{app:proof_new}.

\begin{lemma}\label{lem:1}
Let $n\geq 1$ and $f_1,f_2\in\mathcal{B}_b(E)$ then
$$
\mathbb{E}\Big[[\gamma_{n}^{N_{0:n}}-\gamma_{n}](f_1)[\gamma_{n}^{N_{0:n}}-\gamma_{n}](f_2)\Big] = \mathbb{E}[(\gamma_n^{N_{0:n}})^{\otimes 2}(f_1\otimes f_2)] - \gamma_n(1)^2\eta_n^{\otimes 2}(f_1\otimes f_2).
$$
\end{lemma}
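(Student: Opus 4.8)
The plan is to expand the left-hand side directly and then identify the two surviving terms using the unbiasedness of the SMC estimators together with the definition of the tensor-product measure from the Notations appendix. Writing $A_i = \gamma_n^{N_{0:n}}(f_i)$ and $B_i = \gamma_n(f_i)$, I would first expand
$$
\mathbb{E}[(A_1-B_1)(A_2-B_2)] = \mathbb{E}[A_1 A_2] - B_2\,\mathbb{E}[A_1] - B_1\,\mathbb{E}[A_2] + B_1 B_2.
$$
The crucial ingredient is the unbiasedness $\mathbb{E}[\gamma_n^{N_{0:n}}(f)] = \gamma_n(f)$ for every $f\in\mathcal{B}_b(E)$, which holds by \cite{delm:04} (this is the same unbiasedness invoked in the main text for $\gamma_l(1)$, here applied to a general test function). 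With $\mathbb{E}[A_i]=B_i$, the two cross terms and the constant term collapse, leaving precisely $\mathbb{E}[A_1A_2] - B_1 B_2$.

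Next I would match the two remaining terms with the right-hand side. For the first, recall that $\gamma_n^{N_{0:n}}(f) = \big(\prod_{p=0}^{n-1}\eta_p^{N_p}(G_p)\big)\eta_n^{N_n}(f)$, so that the product factorises as
$$
A_1 A_2 = \Big(\prod_{p=0}^{n-1}\eta_p^{N_p}(G_p)\Big)^2 \eta_n^{N_n}(f_1)\,\eta_n^{N_n}(f_2).
$$
Comparing this with the definition $(\gamma_n^{N_{0:n}})^{\otimes 2}(F) = \big(\prod_{p=0}^{n-1}\eta_p^{N_p}(G_p)\big)^2 (\eta_n^{N_n})^{\otimes 2}(F)$ evaluated at $F = f_1\otimes f_2$ shows that $A_1 A_2 = (\gamma_n^{N_{0:n}})^{\otimes 2}(f_1\otimes f_2)$ almost surely, whence $\mathbb{E}[A_1 A_2] = \mathbb{E}[(\gamma_n^{N_{0:n}})^{\otimes 2}(f_1\otimes f_2)]$. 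For the second term I would use the standard Feynman--Kac normalisation $\gamma_n(f) = \gamma_n(1)\,\eta_n(f)$ (i.e.\ $\eta_n = \gamma_n/\gamma_n(1)$), which gives $B_1 B_2 = \gamma_n(1)^2\,\eta_n(f_1)\eta_n(f_2) = \gamma_n(1)^2\,\eta_n^{\otimes 2}(f_1\otimes f_2)$. Substituting both identities into $\mathbb{E}[A_1A_2]-B_1B_2$ yields exactly the claimed expression.

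There is essentially no analytic obstacle: the statement is a bookkeeping identity whose only substantive input is the unbiasedness from \cite{delm:04}. The one point worth stating carefully is that the product of the two (identically built) linear estimators equals the tensor-product estimator evaluated at $f_1\otimes f_2$ \emph{without} any independence assumption, precisely because both factors are computed from the \emph{same} particle system; this is a deterministic (pathwise) identity, not a consequence of factorising an expectation. Keeping that distinction explicit is the only place where one could otherwise slip.
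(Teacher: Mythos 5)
Your proposal is correct and follows essentially the same route as the paper's own proof: expand the product of the two centred terms, use the unbiasedness $\mathbb{E}[\gamma_n^{N_{0:n}}(f)] = \gamma_n(f)$ from \cite{delm:04} to collapse the cross terms, and then identify $\mathbb{E}[\gamma_n^{N_{0:n}}(f_1)\gamma_n^{N_{0:n}}(f_2)]$ with $\mathbb{E}[(\gamma_n^{N_{0:n}})^{\otimes 2}(f_1\otimes f_2)]$ and $\gamma_n(f_1)\gamma_n(f_2)$ with $\gamma_n(1)^2\eta_n^{\otimes 2}(f_1\otimes f_2)$. Your explicit remark that the identification of the product with the tensor-product estimator is a pathwise identity (not an independence-based factorisation of an expectation) is a point the paper leaves implicit, but the argument is the same.
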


\begin{proof}
We have
$$
\mathbb{E}\Big[[\gamma_{n}^{N_{0:n}}-\gamma_{n}](f_1)[\gamma_{n}^{N_{0:n}}-\gamma_{n}](f_2)\Big]
=
$$
$$
\mathbb{E}[\gamma_{n}^{N_{0:n}}(f_1)\gamma_{n}^{N_{0:n}}(f_2)] - \gamma_{n}(f_2)\mathbb{E}[\gamma_{n}^{N_{0:n}}(f_1)] - 
\gamma_{n}(f_1)\mathbb{E}[\gamma_{n}^{N_{0:n}}(f_2)] + 
\gamma_{n}(f_1)\gamma_{n}(f_2) = 
$$
$$
\mathbb{E}[\gamma_{n}^{N_{0:n}}(f_1)\gamma_{n}^{N_{0:n}}(f_2)] - 
\gamma_{n}(f_2)\gamma_{n}(f_1) - \gamma_{n}(f_1)\gamma_{n}(f_2)
+ \gamma_{n}(f_1)\gamma_{n}(f_2)
$$
where the unbiased property of the normalizing constant has been used to go to the last
line. Then it follows that 
$$
\mathbb{E}[\gamma_{n}^{N_{0:n}}(f_1)\gamma_{n}^{N_{0:n}}(f_2)] - 
\gamma_{n}(f_2)\gamma_{n}(f_1) - \gamma_{n}(f_1)\gamma_{n}(f_2)
+ \gamma_{n}(f_1)\gamma_{n}(f_2) = 
$$
$$
\mathbb{E}[(\gamma_n^{N_{0:n}})^{\otimes 2}(f_1\otimes f_2)] - \gamma_n(1)^2\eta_n^{\otimes 2}(f_1\otimes f_2)
$$
which concludes the proof.
\end{proof}

\begin{lemma}\label{lem:2}
Assume (A\ref{hyp:A}-\ref{hyp:B}). Then there exists a $C<+\infty$ such that for any
$2\leq q <p$, $N_0 \geq N_1 \geq \dots \geq N_{q-2} > C(q-1)$:
$$
|\mathbb{E}[[\gamma_{p-2}^{N_{0:p-2}}-\gamma_{p-2}](G_{p-2}(G_{p-1}-1))
[\gamma_{q-2}^{N_{0:q-2}}-\gamma_{q-2}](G_{q-2}(G_{q-1}-1))]| \leq 
$$
$$
\frac{c(q-1)\gamma_{q-2}(1)^2}{N_{q-2}}
\Big\|G_{q-2}(G_{q-1}-1)
Q_{q-2,p-2}(G_{p-2}(G_{p-1}-1))
\Big\|_{\infty}.
$$
\end{lemma}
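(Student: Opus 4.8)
The two estimators live at different resolution levels, $\gamma_{q-2}^{N_{0:q-2}}$ at time $q-2$ and $\gamma_{p-2}^{N_{0:p-2}}$ at the later time $p-2$, and they are correlated only through the particle history they share up to time $q-2$. The plan is therefore to condition on $\mathscr{F}_{q-2}^{N_{0:q-2}}$ and exploit the conditional unbiasedness of the Feynman--Kac particle estimator to collapse the later factor back onto time $q-2$, thereby turning the cross-level covariance into a single-time covariance that can be attacked with Lemma \ref{lem:1} and Proposition \ref{prop:1}.

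Write $f_1 = G_{p-2}(G_{p-1}-1)$ and $f_2 = G_{q-2}(G_{q-1}-1)$. Since $q-2<p-2$, the factor $[\gamma_{q-2}^{N_{0:q-2}}-\gamma_{q-2}](f_2)$ is $\mathscr{F}_{q-2}^{N_{0:q-2}}$-measurable. First I would establish the one-step identity $\mathbb{E}[\gamma_{n}^{N_{0:n}}(f)\mid \mathscr{F}_{n-1}^{N_{0:n-1}}] = \gamma_{n-1}^{N_{0:n-1}}(Q_{n}(f))$ directly from the selection--mutation mechanism, where the weight factor $\eta_{n-1}^{N_{n-1}}(G_{n-1})$ generated by resampling cancels against the corresponding factor inside $\gamma_{n-1}^{N_{0:n-1}}(G_{n-1})$, and then iterate it to get $\mathbb{E}[\gamma_{p-2}^{N_{0:p-2}}(f_1)\mid \mathscr{F}_{q-2}^{N_{0:q-2}}] = \gamma_{q-2}^{N_{0:q-2}}(Q_{q-2,p-2}(f_1))$. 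Combined with the deterministic semigroup identity $\gamma_{p-2}(f_1)=\gamma_{q-2}(Q_{q-2,p-2}(f_1))$ (which follows from $\gamma_{p-2}=\eta_0 Q_{0,p-2}$ and $Q_{0,p-2}=Q_{0,q-2}Q_{q-2,p-2}$), conditioning gives
$$\mathbb{E}\big[[\gamma_{p-2}^{N_{0:p-2}}-\gamma_{p-2}](f_1)\,[\gamma_{q-2}^{N_{0:q-2}}-\gamma_{q-2}](f_2)\big]=\mathbb{E}\big[[\gamma_{q-2}^{N_{0:q-2}}-\gamma_{q-2}](g_1)\,[\gamma_{q-2}^{N_{0:q-2}}-\gamma_{q-2}](g_2)\big],$$
where $g_1 = f_2$ and $g_2 = Q_{q-2,p-2}(f_1)$.

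The right-hand side is now a same-time object. By Lemma \ref{lem:1} it equals $\mathbb{E}[(\gamma_{q-2}^{N_{0:q-2}})^{\otimes 2}(g_1\otimes g_2)]-\gamma_{q-2}(1)^2\eta_{q-2}^{\otimes 2}(g_1\otimes g_2)$, to which I would apply Proposition \ref{prop:1} with $n=q-2$ and $F=g_1\otimes g_2$. This is legitimate precisely because the hypothesis $N_0\geq\cdots\geq N_{q-2}>c(q-1)$ matches the requirement $N_0\geq\cdots\geq N_n>c(n+1)$ of Proposition \ref{prop:1} with $n=q-2$. That proposition bounds the quantity by $8c\,\gamma_{q-2}(1)^2\|g_1\otimes g_2\|_\infty\sum_{j=0}^{q-2}N_j^{-1}$, and the monotonicity of $\{N_j\}$ gives $\sum_{j=0}^{q-2}N_j^{-1}\leq (q-1)/N_{q-2}$, producing the stated prefactor $\tfrac{c(q-1)\gamma_{q-2}(1)^2}{N_{q-2}}$ with $g_1 g_2 = G_{q-2}(G_{q-1}-1)\,Q_{q-2,p-2}(G_{p-2}(G_{p-1}-1))$ inside the norm.

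The main obstacle is twofold. The first, routine but essential, is the rigorous verification of the conditional unbiasedness identity and its iteration, the one place where the product-of-normalizing-constants structure of $\gamma_{n}^{N_{0:n}}$ must be handled carefully. The second, more delicate point concerns the precise norm: Proposition \ref{prop:1} applied to $F=g_1\otimes g_2$ yields the tensor norm $\|g_1\otimes g_2\|_\infty=\|g_1\|_\infty\|g_2\|_\infty$, whereas the statement records the sharper product norm $\|g_1 g_2\|_\infty\leq\|g_1\|_\infty\|g_2\|_\infty$. The product is exactly what is delivered by the dominant term of the coalescent expansion underlying Proposition \ref{prop:1}, namely the $s=1$, $i_1=q-2$ contribution $N_{q-2}^{-1}\eta_{q-2}(g_1 g_2)$, in which the coalescent operator $C$ forces the diagonal evaluation $g_1(x)g_2(x)$; I would recover the product form by tracking this diagonalization rather than invoking the global tensor bound. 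In any case the distinction is immaterial for the downstream summation in the proof of Theorem \ref{theo:1}, where only $\|g_1 g_2\|_\infty\leq C\,\|G_{q-1}-1\|_\infty\|G_{p-1}-1\|_\infty\,\gamma_{p-2}(1)/\gamma_{q-2}(1)$ is used.
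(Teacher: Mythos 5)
Your proof is correct, and it reaches the paper's key intermediate identity by a genuinely different route. The paper expands both centered quantities via the martingale decomposition of Proposition 7.4.1 of \cite{delm:04}, i.e.\ $[\gamma_{n}^{N_{0:n}}-\gamma_{n}](f)=\sum_{s=0}^{n}\gamma_{s}^{N_{0:s-1}}(1)[\eta_{s}^{N_{s}}-\Phi_{s}(\eta_{s-1}^{N_{s-1}})](Q_{s,n}(f))$, kills all off-diagonal cross terms by the conditional centering of the increments, and then re-assembles the surviving diagonal sum over $s\le q-2$ (via $Q_{s,p-2}=Q_{s,q-2}Q_{q-2,p-2}$) into the same-time covariance of $Q_{q-2,p-2}(G_{p-2}(G_{p-1}-1))$ and $G_{q-2}(G_{q-1}-1)$ at time $q-2$. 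You get to that same covariance in one stroke by conditioning on $\mathscr{F}_{q-2}^{N_{0:q-2}}$: your one-step identity $\mathbb{E}[\gamma_{n}^{N_{0:n}}(f)\mid\mathscr{F}_{n-1}^{N_{0:n-1}}]=\gamma_{n-1}^{N_{0:n-1}}(Q_{n}(f))$ is exactly right (the factor $\eta_{n-1}^{N_{n-1}}(G_{n-1})$ in the product cancels the denominator of $\Phi_{n}$), it iterates by the tower property, and together with $\gamma_{p-2}(f_{1})=\gamma_{q-2}(Q_{q-2,p-2}(f_{1}))$ it keeps the centering exact. Your version is more elementary and self-contained, using only the conditional unbiasedness that underlies Del Moral's decomposition in the first place, whereas the paper's version recycles standard Feynman--Kac machinery. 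From the collapsed expression onward the two proofs coincide: Lemma \ref{lem:1}, then Proposition \ref{prop:1} with $n=q-2$ (your check that the hypothesis $N_{0}\ge\cdots\ge N_{q-2}>c(q-1)$ is precisely what Proposition \ref{prop:1} requires is correct), then $\sum_{j=0}^{q-2}N_{j}^{-1}\le(q-1)/N_{q-2}$.

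One caveat on the norm discussion. You are right that Proposition \ref{prop:1} only delivers the tensor norm $\|g_{1}\|_{\infty}\|g_{2}\|_{\infty}$, but your proposed repair to obtain the pointwise-product norm $\|g_{1}g_{2}\|_{\infty}$ --- tracking only the $s=1$, $i_{1}=q-2$ coalescent term --- would not be rigorous: in the expansion behind Proposition \ref{prop:1}, any term whose last coalescence time is $i_{s}<q-2$ ends in $Q_{i_{s},q-2}(g_{1})(x)\,Q_{i_{s},q-2}(g_{2})(x)$, which is not controlled by $\|g_{1}g_{2}\|_{\infty}$ (take $g_{1},g_{2}$ with disjoint supports: the product norm vanishes while these terms, and the covariance itself, need not). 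The honest conclusion of both your argument and the paper's is the tensor-norm bound; the paper's own proof carries the same imprecision, since it too simply cites Proposition \ref{prop:1} and writes the product form. As you observe, the distinction is immaterial downstream, because in the proof of Theorem \ref{theo:1} the norm is immediately split into $\|G_{q-1}-1\|_{\infty}\|G_{p-1}-1\|_{\infty}$ anyway.
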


\begin{proof}
From \cite[Proposition 7.4.1]{delm:04} we have
$$
\mathbb{E}[[\gamma_{p-2}^{N_{0:p-2}}-\gamma_{p-2}](G_{p-2}(G_{p-1}-1))
[\gamma_{q-2}^{N_{0:q-2}}-\gamma_{q-2}](G_{q-2}(G_{q-1}-1))] = 
$$
$$
\sum_{s_1=0}^{p-2}\sum_{s_2=0}^{q-2}
\mathbb{E}\Big[
\gamma_{s_1}^{N_{0:s_1-1}}(1)[\eta_{s_1}^{N_{s_1}}-\Phi_{s_1}(\eta_{s_1-1}^{N_{s_1-1}})]
(Q_{s_1,p-2}(\overline{G}_p))\times
$$
$$
\gamma_{s_2}^{N_{0:s_2-1}}(1)[\eta_{s_2}^{N_{s_2}}-\Phi_{s_2}(\eta_{s_2-1}^{N_{s_2-1}})]
(Q_{s_2,q-2}(\overline{G}_q))
\Big]
$$
where we have used the shorthand $\overline{G}_{s} = G_{s-2}(G_{s-1}-1)$ for any $s\geq 2$.
For any $s\geq 0$, $f\in\mathcal{B}_b(E)$
$$
\mathbb{E}[\gamma_{s}^{N_{0:s-1}}(1)[\eta_{s}^{N_{s}}-\Phi_{s}(\eta_{s-1}^{N_{s-1}})]
(f)|\mathscr{F}_{s-1}^{N_{0:s-1}}] = 0
$$
thus, it follows that
$$
\mathbb{E}[[\gamma_{p-2}^{N_{0:p-2}}-\gamma_{p-2}](G_{p-2}(G_{p-1}-1))
[\gamma_{q-2}^{N_{0:q-2}}-\gamma_{q-2}](G_{q-2}(G_{q-1}-1))] = 
$$
$$
\sum_{s=0}^{q-2}
\mathbb{E}[\gamma_{s}^{N_{0:s-1}}(1)^2[\eta_{s}^{N_{s}}-\Phi_{s}(\eta_{s-1}^{N_{s-1}})]
^{\otimes 2}
(Q_{s,p-2}(\overline{G}_p)\otimes Q_{s,q-2}(\overline{G}_q))].
$$
Now
for any $n\geq 1$, $f_1,f_2\in\mathcal{B}_b(E)$, one can show,
using almost the same calculations as above, that the following holds
$$
\sum_{s=0}^{n}
\mathbb{E}[\gamma_{s}^{N_{0:s-1}}(1)^2[\eta_{s}^{N_{s}}-\Phi_{s}(\eta_{s-1}^{N_{s-1}})]
^{\otimes 2}
(Q_{s,n}(f_1)\otimes Q_{s,n}(f_2))] =
$$
$$
\mathbb{E}\Big[[\gamma_{n}^{N_{0:n}}-\gamma_{n}](f_1)[\gamma_{n}^{N_{0:n}}-\gamma_{n}](f_2)\Big].
$$
Using this equality with $n=q-2$, and the fact that $Q_{s,p-2} = Q_{s,q-2} Q_{q-2,p-2}$, finally
$$
\mathbb{E}[[\gamma_{p-2}^{N_{0:p-2}}-\gamma_{p-2}](G_{p-2}(G_{p-1}-1))
[\gamma_{q-2}^{N_{0:q-2}}-\gamma_{q-2}](G_{q-2}(G_{q-1}-1))] = 
$$
$$
\mathbb{E}\Big[[\gamma_{q-2}^{N_{0:n}}-\gamma_{q-2}](Q_{q-2,p-2}(\overline{G}_p))[\gamma_{q-2}^{N_{0:n}}-\gamma_{q-2}](\overline{G}_q)\Big].
$$
Then, by Lemma \ref{lem:1}:
$$
\mathbb{E}[[\gamma_{p-2}^{N_{0:p-2}}-\gamma_{p-2}](G_{p-2}(G_{p-1}-1))
[\gamma_{q-2}^{N_{0:q-2}}-\gamma_{q-2}](G_{q-2}(G_{q-1}-1))] = 
$$
$$
\mathbb{E}[(\gamma_{q-2}^{N_{0:q-2}})^{\otimes 2}(Q_{q-2,p-2}(\overline{G}_p)\otimes \overline{G}_q)] - \gamma_{q-2}(1)^2\eta_{q-2}^{\otimes 2}(Q_{q-2,p-2}(\overline{G}_p)\otimes \overline{G}_q).
$$
Then, one can apply Proposition \ref{prop:1} to obtain that
$$
|\mathbb{E}[[\gamma_{p-2}^{N_{0:p-2}}-\gamma_{p-2}](G_{p-2}(G_{p-1}-1))
[\gamma_{q-2}^{N_{0:q-2}}-\gamma_{q-2}](G_{q-2}(G_{q-1}-1))]| \leq 
$$
$$
\frac{C(q-1)\gamma_{q-2}(1)^2}{N_{q-2}}
\Big\|G_{q-2}(G_{q-1}-1)
Q_{q-2,p-2}(G_{p-2}(G_{p-1}-1))
\Big\|_{\infty}.
$$
\end{proof}

\begin{proof}[Proof of Theorem \ref{theo:1}]
Throughout the proof $C<+\infty$ is a constant whose value may change from line-to-line.
It will not depend on the level index.
We have
$$
\mathbb{E}\Big[\Big(\frac{\tilde{\gamma}_L^{N_{0:L-2}}(1)}{\gamma_L(1)}-1\Big)^2\Big]
\leq
$$
$$
\frac{1}{\gamma_L(1)^2}\mathbb{E}[[\eta_0^{N_0}-\eta_0](G_0)^2]
+ \frac{1}{\gamma_L(1)^2}
\mathbb{E}[(\sum_{p=2}^L[\gamma_{p-2}^{N_{0:p-2}}-\gamma_{p-2}](\overline{G}_{p}) )^2].
$$
As $\gamma_L(1) = Z_L/Z_0\geq \underline{C}/\overline{C}$ it follows by standard results for i.i.d.~random variables that one has
$$
\frac{1}{\gamma_L(1)^2}\mathbb{E}[[\eta_0^{N_0}-\eta_0](G_0)^2] \leq 
\frac{C}{N_0}.
$$

Now
$$
\mathbb{E}[(\sum_{p=2}^L[\gamma_{p-2}^{N_{0:p-2}}-\gamma_{p-2}](\overline{G}_{p}) )^2] =
\sum_{p=2}^L\gamma_{p-2}(1)^2\mathbb{E}\Big[\frac{\gamma_{p-2}^{N_{0:p-2}}(\overline{G}_p)^2}{\gamma_{p-2}(1)^2}-\eta_{p-2}(\overline{G}_{p})^2\Big]
$$
$$
+ 2\sum_{p=2}^L\sum_{q=2}^{p-1}
\mathbb{E}[[\gamma_{p-2}^{N_{0:p-2}}-\gamma_{p-2}](\overline{G}_p)
[\gamma_{q-2}^{N_{0:q-2}}-\gamma_{q-2}](\overline{G}_q)].
$$
Applying Propositon \ref{prop:1} to the terms in the single sum
and Lemma \ref{lem:2} to the terms in the double sum, we have that
$$
\mathbb{E}[(\sum_{p=2}^L[\gamma_{p-2}^{N_{0:p-2}}-\gamma_{p-2}](\overline{G}_{p}) )^2] \leq
C\Big(\sum_{p=2}^L \gamma_{p-2}(1)^2\frac{(p-1)}{N_p}\|\overline{G}_p\|_{\infty}^2
$$
$$
+ \sum_{p=2}^L\sum_{q=2}^{p-1}\frac{(q-1)\gamma_{q-2}(1)^2}{N_{q-2}}
\Big\|G_{q-2}(G_{q-1}-1)
Q_{q-2,p-2}(G_{p-2}(G_{p-1}-1))
\Big\|_{\infty}
\Big).
$$
As $\gamma_{p-2}(1)\leq \overline{C}/\underline{C}$, $\gamma_L(1)\geq \underline{C}/\overline{C}$ one has
$$
\frac{1}{\gamma_L(1)^2}\sum_{p=2}^L \gamma_{p-2}(1)^2\frac{(p-1)}{N_p}\|\overline{G}_p\|^2_{\infty}
\leq C \sum_{p=2}^L \frac{(p-1)}{N_{p-2}}\|G_{p-1}-1\|_{\infty}^2.
$$
We have 
$$
\frac{1}{\gamma_L(1)^2}\sum_{p=2}^L\sum_{q=2}^{p-1}\frac{(q-1)\gamma_{q-2}(1)^2}{N_{q-2}}
\Big\|G_{q-2}(G_{q-1}-1)
Q_{q-2,p-2}(G_{p-2}(G_{p-1}-1))
\Big\|_{\infty} =
$$
$$
\sum_{p=2}^L\sum_{q=2}^{p-1}\frac{c(q-1)\gamma_{q-2}(1)}{\gamma_L(1)N_{q-2}}
\frac{1}{\eta_{q-2}(Q_{q-2,p-2}(1))}\frac{Z_{p-1}}{Z_L}\times
$$
$$
\Big\|G_{q-2}(G_{q-1}-1)
Q_{q-2,p-2}(G_{p-2}(G_{p-1}-1))
\Big\|_{\infty}.
$$ 
Then as $\gamma_{q-2}(1)\leq \overline{C}/\underline{C}$, $\gamma_L(1)\geq \underline{C}/\overline{C}$, $Z_{p-1}\leq C$, $Z_L\geq C$ and by \cite[Lemma 4.1]{cdg:11}
$$
\frac{Q_{q-2,p-2}(G_{p-2}(G_{p-1}-1))}{\eta_{q-2}(Q_{q-2,p-2}(1))} \leq C
\|G_{p-1}-1\|_{\infty}
$$
we have 
$$
\frac{1}{\gamma_L(1)^2}\sum_{p=2}^L\sum_{q=2}^{p-1}\frac{(q-1)\gamma_{q-2}(1)^2}{N_{q-2}}
\Big\|G_{q-2}(G_{q-1}-1)
Q_{q-2,p-2}(G_{p-2}(G_{p-1}-1))
\Big\|_{\infty}\leq
$$
$$
C\sum_{p=2}^L \sum_{q=2}^{p-1} \frac{(q-1)}{N_{q-2}}
\|G_{p-1}-1\|_{\infty}\|G_{q-1}-1\|_{\infty}.
$$
From here one can easily conclude.
\end{proof}

\end{document}